\DeclareMathOperator{\op}{op}
\DeclareMathOperator{\id}{id}
\DeclareMathOperator{\tw}{\mathit{tw}}
\renewcommand{\P}{\mathcal{P}}
\newcommand{\Defeq}
 {\stackrel{\mathrm{def}}{=}}
\newcommand{\stran}{\raise1pt\hbox{$\centerdot$}}
\newcommand{\rring}[1]{\ensuremath{\mathbb{#1}}}
\newcommand{\N}{\rring{N}}
\newcommand{\Set}{\cat{Set}\xspace}
\newcommand{\diag}{\Delta}
\newcommand{\codiag}{\triangledown}
\renewcommand{\t}{\times}
\newcommand{\ladj}[2]{\ar@/^/[#1]^-{#2} \ar@{}[#1]|-%

{\ifthenelse{\equal{#1}{r}}{\bot}{%

{\ifthenelse{\equal{#1}{rr}}{\bot}{%

{\ifthenelse{\equal{#1}{l}}{\top}{%

{\ifthenelse{\equal{#1}{u}}{\dashv}{%

{\vdash}}}}}}}}}}
\newcommand{\radj}[2]{\ar@/^/[#1]^-{#2}}
\newcommand{\radjff}[2]{\ar@{_{(}->}[#1]^{#2}}
\newcommand{\pullbacktop}[4]{%

{#1} \ar@/_/[ddr]_{#4} \ar@/^/[drr]^{#2}%

\ar@{.>}[dr]|-{#3} \\}
\newtheorem{clm}{Claim}[section]
\newtheorem{lem}[clm]{Lemma}
\newtheorem{rmk}[clm]{Remark}
\newcommand{\cat}[1]{\ensuremath{\mathbf{#1}}}
\newcounter{ncomm}
\newcommand{\ltsred}[1]
{ \setbox0=\hbox{$\ {}^{#1}\ $}
  \setbox1=\hbox{$\longrightarrow$}
  \loop\setbox1=\hbox{$-$\kern-0.3em\unhbox1}\ifdim\wd1<\wd0\repeat
  \hbox{$\ \ \mathop{\box1}\limits^{#1}\ \ $}
}
\newcommand{\arx}[2]{\!\xymatrix@=15pt{\ar[r]^{{#1}}_{{#2}}&}\!}
\newlength{\mylength}
\newcommand{\ind}[1]{\mathrel{\shortparallel_{#1}}}
\newcommand{\Rel}{\mathbf{Rel}}
\newcommand{\lift}[1]{{#1}^{\#}}
\newcommand{\contsymb}{\bowtie}
\newcommand{\contention}[1]{\mathrel{\contsymb_{#1}}}
\newcommand{\comp}{\mathrel{;}}
\newcommand{\multiset}[1]{\mathcal{M}_{#1}}
\newcommand{\U}{\mathcal{U}}
\newcommand{\V}{\mathcal{V}}
\newcommand{\msynch}[2]{\mathsf{snc}(#1,#2)}
\newcommand{\minmsynch}[2]{\mathsf{minsnc}(#1,#2)}
\newcommand{\Csp}[1]{\mathsf{Csp}(#1)}
\newcommand{\Sp}[1]{\mathsf{Sp}(#1)}
\newcommand{\Spr}[1]{\mathsf{Spr}(#1)}
\newcommand{\figref}[1]{Fig.\;\ref{#1}}
\renewcommand{\diag}{\lower12pt\hbox{$\includegraphics[width=1cm]{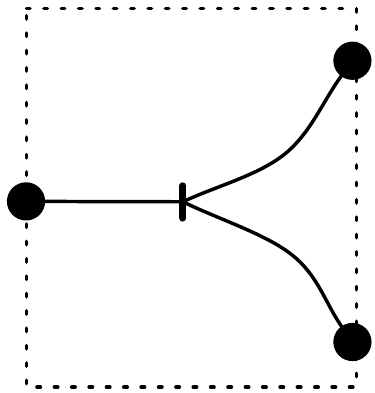}$}\,}
\renewcommand{\codiag}{\lower12pt\hbox{$\includegraphics[width=1cm]{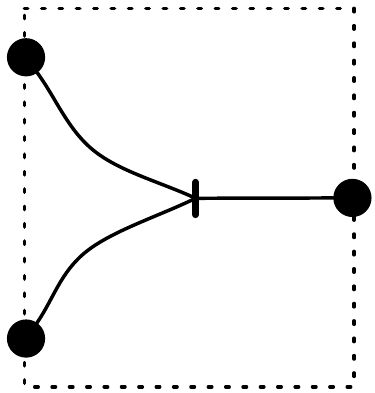}$}\,}
\newcommand{\ldiag}{\lower12pt\hbox{$\includegraphics[width=1cm]{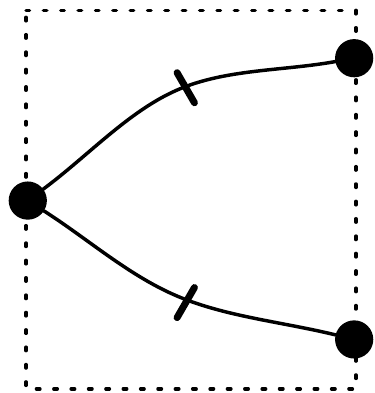}$}\,}
\newcommand{\lcodiag}{\lower12pt\hbox{$\includegraphics[width=1cm]{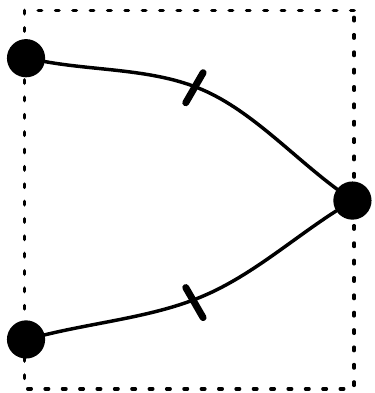}$}\,}
\newcommand{\Setfc}{\Set_f^{c}}
\newcommand{\Relfc}{\Rel_f^{c}}
\newcommand{\multirel}{\Rel^{\mathcal{M}}}
\newcommand{\Relfm}{\Rel_f^{\mathcal{M}}}
\newcommand{\Pc}{\P_{c}}
\newcommand{\Kl}[1]{\mathsf{Kl}(#1)}
\newcommand{\superimpose}[2]{%
  {\ooalign{$#1\@firstoftwo#2$\cr\hfil$#1\@secondoftwo#2$\hfil\cr}}}
\newcommand{\relto}{\mathrel{\mathpalette\superimpose{{\rightarrow}{\shortmid}}}}
\newcommand{\mrelto}{\mathrel{\mathpalette\superimpose{{\rightarrow}{\!\raise.3pt\hbox{$\shortparallel$}}}}}
\newcommand{\minsynch}[2]{\mathsf{minsnc}(#1,#2)}
\renewcommand{\op}{\mathsf{op}}
\newcommand{\xrelleftarrow}[1]{\mathrel{\mathpalette\superimpose{{\xleftarrow{#1}}{\shortmid}}}}
\newcommand{\xrelrightarrow}[1]{\mathrel{\mathpalette\superimpose{{\xrightarrow{#1}}{\shortmid}}}}
\newcommand{\xmrelleftarrow}[1]{\mathrel{\mathpalette\superimpose{{\xleftarrow{#1}}{\,\raise.3pt\hbox{$\shortparallel$}}}}}
\newcommand{\xmrelrightarrow}[1]{\mathrel{\mathpalette\superimpose{{\xrightarrow{#1}}{\!\raise.3pt\hbox{$\shortparallel$}}}}}
\newcommand{\synch}[2]{\langle #1 \curlyveedownarrow #2 \rangle}
\newcommand{\idsym}{\mathsf{I}}
\newcommand{\diagsym}{\mathsf{\Delta}}
\newcommand{\codiagsym}{
\ensuremath{\mathchoice
{{\rotatebox[origin=c]{180}{$\diagsym$}}\displaystyle} 
{{\rotatebox[origin=c]{180}{$\diagsym$}}\textstyle} 
{\!{\rotatebox[origin=c]{180}{$\scriptstyle\diagsym$}}\!\scriptstyle} 
{{\rotatebox[origin=c]{180}{$\scriptscriptstyle\diagsym$}}\scriptscriptstyle}
}}
\newcommand{\ldiagsym}{{\mathsf{\Lambda}}}
\newcommand{\lcodiagsym}{{\mathsf{V}}}
\newcommand{\leftEndsym}{\ensuremath{\pmb{\top}}}
\newcommand{\rightEndsym}{\ensuremath{\pmb{\bot}}}
\newcommand{\twsym}{\ensuremath{\mathsf{X}}}
\newcommand{\lzerosym}{\ensuremath{\,\pmb{\uparrow}\,}}
\newcommand{\rzerosym}{\ensuremath{\,\pmb{\downarrow}\,}}
\newcommand{\Setf}{\Set_f}
\title{Nets, relations and linking diagrams}
\author{Pawe{\l} Soboci{\'n}ski}
\institute{ECS, University of Southampton, UK}
\begin{document}
\maketitle
\begin{abstract}
 In recent work, the author and others have studied compositional algebras of Petri nets. Here we consider mathematical aspects of the pure linking algebras that underly them. We characterise composition of nets without places as the composition of spans over appropriate categories of relations, and study the underlying algebraic structures.
\end{abstract}


\section*{Introduction}


Linking structures are ubiquitous in Computer Science, Logic and Mathematics. Amongst many examples, we mention Kelly-Laplaza graphs for compact closed categories~\cite{Kelly1980} and proof nets~\cite{Girard1987}. Linking diagrams\footnote{We use this terminology loosely to mean ``string diagrams without boxes.''} underly string diagrams~\cite{Joyal1991,Selinger2009} that are used to characterise the arrows of various kinds of free categories. Similar structures have been used by Computer Scientists to develop foundational algebras for composing software components~\cite{Bruni2006,Arbab2004}. Theoretical work has led to tool support for reasoning about different kinds of string diagrams~\cite{Kissinger2012,Sobocinski2013c}.

In~\cite{Soboci'nski2010,Bruni2011,Bruni2013,Sobocinski2013} the author and others have studied compositional algebras of Petri nets. The two main variants, studied in detail in~\cite{Bruni2013}, are \emph{C/E nets with boundaries} and \emph{P/T nets with boundaries}. Nets without places are pure algebras of linkings; we show in this paper that they are, respectively, the arrows of two categories $\Sp{\Relfc}$ and $\Spr{\Relfm}$\footnote{The notation $\Sp{-}$ means ``not quite the category of spans,'' as the objects are the natural numbers, instead of arbitrary sets. Similarly $\Spr{-}$ is ``not quite the category of relational spans,'' where relational means that the two legs are jointly mono. Both categories are PROPs~\cite{MacLane1965,Lack2004a}.}. Recently, string diagrams and closely related algebraic structures have also been used to reason about quantum computation~\cite{Abramsky2004,Selinger2007,Coecke2009}. 

Both categories are generated from a set of basic components, which are the building blocks of \emph{two different} monoid-comonoid structures on the underlying categories. The two structures arise, roughly, from the elementary setting of cospans and spans of finite sets.

In an effort to capture several different kinds of linking algebras, Hughes~\cite{Hughes2008} introduced the category $\mathsf{Link}$ of spans over $\mathbf{iRel}$ the category of injective relations, which has pullbacks. Pullbacks are obtained by considering paths, called \emph{minimal synchronisations}, in the corresponding linking diagrams. Similar ideas are used here in order to construct pullbacks in $\Relfc$, the category of \emph{relations with contention} and weak pullbacks in $\Relfm$, the category of multirelations.
In this paper we study only finite linkings but the category of spans of relations with contention is more expressive than the category of spans of injective relations: the finite counterpart of Hughes' category $\mathsf{Link}$ embeds into $\Sp{\Relfc}$. 

\paragraph{Structure of the paper.}
In \S\ref{sec:components} we introduce the two monoid-comonoid structures that arise from considering cospans and spans of finite sets. In \S\ref{sec:contention} we introduce sets and relations with contention, and show that the category of the latter has pullbacks. This allows us, in \S\ref{sec:linkingdiagscontention} to consider the category $\Sp{\Relfc}$, a universe where both the monoid-comonoid structures can be considered. In \S\ref{sec:multisets} we discuss multirelations and construct weak pullbacks, which we then use in \S\ref{sec:multilinkingdiags} to consider another universe where both the monoid-comonoid structures exist and interact. 

\paragraph{Notational conventions.}
Relations from $X$ to $Y$ are identified with functions $X\to 2^Y$. 
For $k\in\N$ we abuse notation and denote the $k$th finite ordinal $\{0,1\dots,k-1\}$ with $k$. For sets $X$, $Y$, $X+Y\Defeq \{\,(x,0)\,|\,x\in X\,\}\cup\{\,(y,1)\,|\,y\in Y\,\}$. Functions are labelled with $!$ when there is a unique function with that particular domain and codomain, 
$\tw:2\to 2$ is the function $\tw(0)=1$ and $\tw(1)=0$.
Given a function $f:X\to Y$, $[f]\subseteq X\times Y$ is its graph: $[f]\Defeq\{\,(x,fx)\,|\,x\in X\,\}$.
Given a relation $R\subseteq X\times Y$, $R^\op \subseteq Y\times X$ is the opposite relation. 


\section{Components of linking diagrams}
\label{sec:components}


Let $\Csp{\Setf}$ be the category\footnote{Not quite the category of cospans. Again, this is a PROP.} with objects the natural numbers, and arrows
isomorphism classes of cospans $k \rightarrow x \leftarrow l$, where $k$ and $l$ are considered as finite ordinals. Composition is obtained via pushout in $\Setf$, associativity follows from the universal property.
Given $k_1 \rightarrow m_1 \leftarrow l_l$ and $k_2 \rightarrow m_2 \leftarrow l_2$, the tensor product is $k_1 + k_2 \rightarrow m_1+m_2 \leftarrow l_1 + l_2$.

The following diagrams represent certain arrows in $\Csp{\Setf}$.
\begin{figure}
\begin{equation}\label{eq:csp}
\tag{$\diagsym\,\rightEndsym\,\nabla\,\leftEndsym$}
\lower20pt\hbox{$\includegraphics[width=1.5cm]{diag}$}
\qquad
\lower10pt\hbox{$\includegraphics[width=1.5cm]{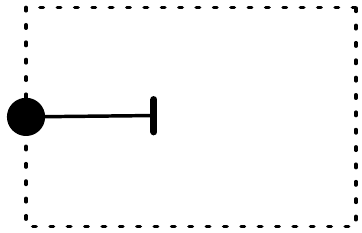}$}
\qquad
\lower20pt\hbox{$\includegraphics[width=1.5cm]{codiag}$}
\qquad
\lower10pt\hbox{$\includegraphics[width=1.5cm]{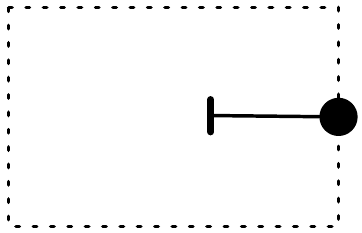}$}
\end{equation}
\[
\diagsym:1\to 2  \qquad
\rightEndsym:1 \to 0 \qquad \quad
\codiagsym:2 \to 1 \qquad
\leftEndsym:0 \to 1\qquad\qquad
\]
\end{figure}
They have representatives
$1 \xrightarrow{\id} 1 \xleftarrow{!} 2$,
$1\xrightarrow{\id} 1 \xleftarrow{!} 0$,
$2 \xrightarrow{!} 1 \xleftarrow{\id} 1$ and 
$0 \xrightarrow{!} 1 \xleftarrow{\id} 1$. 

Our graphical notation calls for further explanation: within the diagrams, each \emph{link}--an undirected multiedge--represents an element of the carrier set, its connections to \emph{boundary ports} (elements of the ordinals on the boundary) are determined in $\Csp{\Setf}$ by the functions from the ordinals that represent the boundaries. Each link has a small perpendicular mark; this is used to distinguish between different links within diagrams.
 
The definition of $\Csp{\Setf}$ enforces some structural restrictions on links. Indeed, each boundary port must be connected to exactly one link; ie no two links can be connected to the same boundary port. 
Any link, however, can be connected to several ports on each boundary. 

\medskip
Now consider $\Sp{\Setf}$, the category with objects the natural numbers, and arrows isomorphism classes of spans $k \leftarrow x \rightarrow l$, where $k$ and $l$ are considered as finite ordinals. Composition is obtained via pullback in $\Setf$, and associativity is again guaranteed by a universal property, this time of pullbacks. Again, $+$ gives a tensor product.

The following diagrams represent certain arrows in $\Sp{\Setf}$.
\begin{figure}
\begin{equation}\label{eq:sp}\tag{$\ldiagsym\!\rzerosym\!\lcodiagsym\!\lzerosym$}
\lower20pt\hbox{$\includegraphics[width=1.5cm]{ldiag}$}
\qquad
\lower10pt\hbox{$\includegraphics[width=1.5cm]{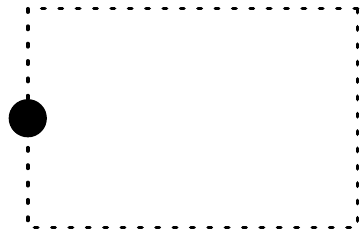}$}
\qquad
\lower20pt\hbox{$\includegraphics[width=1.5cm]{lcodiag}$}
\qquad
\lower10pt\hbox{$\includegraphics[width=1.5cm]{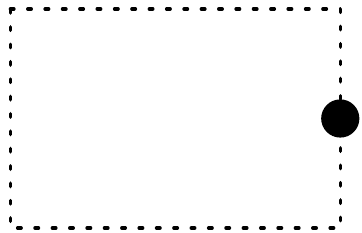}$}
\end{equation}
\[
\ldiagsym:1\to 2  \qquad
\rzerosym:1 \to 0 \qquad
\lcodiagsym:2 \to 1 \qquad
\lzerosym:0 \to 1 \qquad \qquad
\]
\end{figure}
They have representatives $1\xleftarrow{!} 2 \xrightarrow{\id} 2$,
$1\xleftarrow{!} 0 \xrightarrow{\id} 0$, $2\xleftarrow{\id} 2 \xrightarrow{!} 1$ and
$0 \xleftarrow{\id} 0 \xrightarrow{!} 1$.

In the diagrams, the links again represent elements of the carrier set but connections to boundary ports are now given by the functions \emph{from} the carrier \emph{to} the boundaries. Due to the definition of $\Sp{\Setf}$, there are again structural restrictions: each link is connected to exactly one port on each boundary. Any port, however, can be connected to many links.

The following diagrams represent certain arrows in $\Csp{\Setf}$
and $\Sp{\Setf}$. 
\begin{figure}
\begin{equation}\label{eq:common}\tag{$\idsym\,\twsym$}
\lower10pt\hbox{$\includegraphics[width=1.5cm]{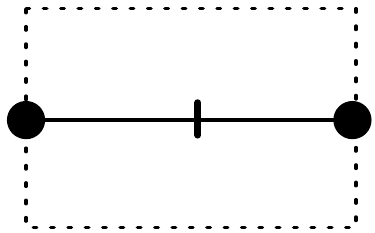}$}
\qquad
\lower19pt\hbox{$\includegraphics[width=1.5cm]{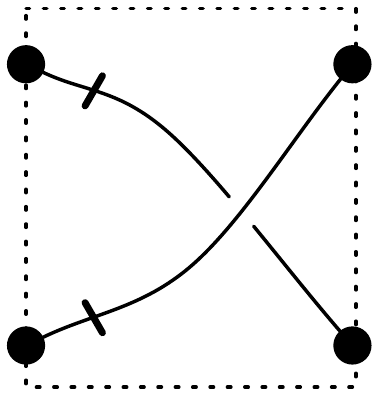}$}
\end{equation}
\[
\idsym: 1\to 1 \qquad \ \twsym: 2\to 2 
\]
\end{figure}
As (isomorphism classes of) cospans they are
$1\rightarrow 1 \leftarrow 1$, $2 \xrightarrow{tw} 2 \leftarrow 2$,
as spans they are $1 \leftarrow 1 \rightarrow 1$, $2 \leftarrow 2 \xrightarrow{tw} 2$.

\subsection{The algebra of $\Csp{\Setf}$}

\begin{figure}
%
%
\begin{equation}\label{eq:cspcomonoid}\tag{$\diagsym$UC}
\begin{tabular}{ c c c c c }
\lower17pt\hbox{$\includegraphics[width=1.5cm]{diag}$} & ; & \lower17pt\hbox{$\includegraphics[width=1.5cm]{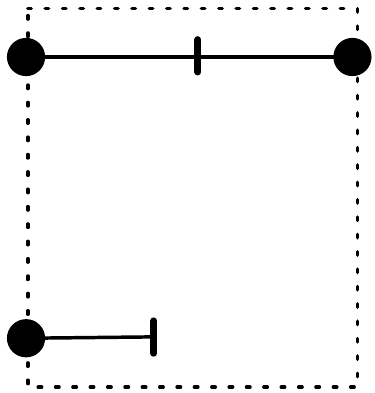}$} & = & \lower9pt\hbox{$\includegraphics[width=1.5cm]{id}$} \\
\end{tabular}
\ 
\begin{tabular}{ c c c c c }
\lower17pt\hbox{$\includegraphics[width=1.5cm]{diag}$} & ; & \lower17pt\hbox{$\includegraphics[width=1.5cm]{tw}$} & = &
\lower17pt\hbox{$\includegraphics[width=1.5cm]{diag}$} \\
\end{tabular}
\end{equation}
%
%
\begin{equation}\label{eq:cspassoc}\tag{$\diagsym$A}
\lower18pt\hbox{$\includegraphics[width=1.5cm]{diag}$} \ \comp \ 
\lower30pt\hbox{$\includegraphics[width=1.5cm]{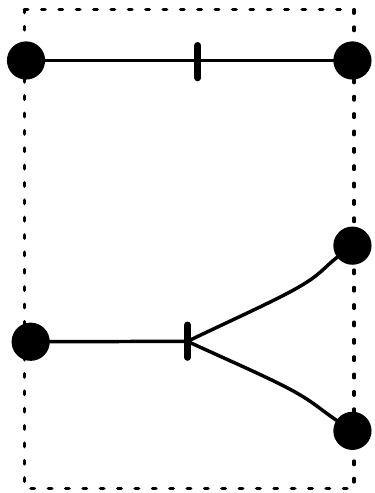}$}
\quad = \quad
\lower28pt\hbox{$\includegraphics[width=1.2cm]{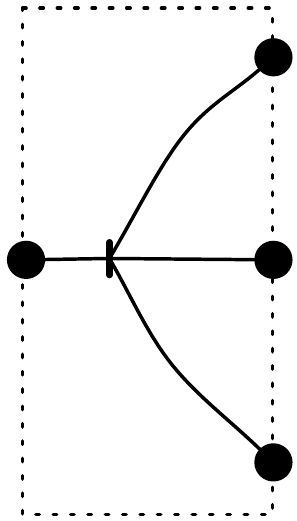}$}
\quad = \quad
\lower20pt\hbox{$\includegraphics[width=1.5cm]{diag}$} \ \comp \ 
\lower22pt\hbox{$\includegraphics[width=1.5cm]{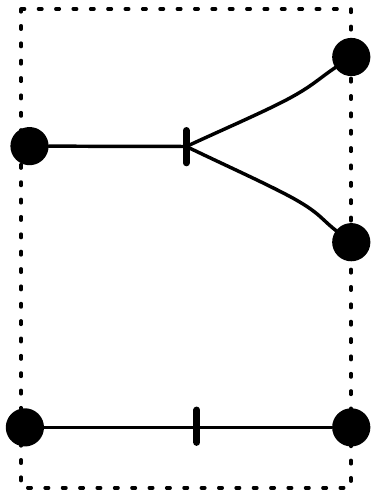}$}
\end{equation}
%
%
\begin{equation}\label{eq:frobenius}\tag{F}
\lower20pt\hbox{$\includegraphics[width=1.2cm]{idtensordiag}$}
\ \comp \ 
\lower20pt\hbox{$\includegraphics[width=1.2cm]{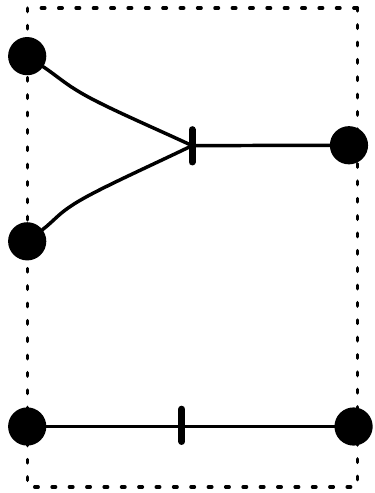}$}
\ = \ 
\lower20pt\hbox{$\includegraphics[width=1.2cm]{diagtensorid}$}
\ \comp \ 
\lower20pt\hbox{$\includegraphics[width=1.2cm]{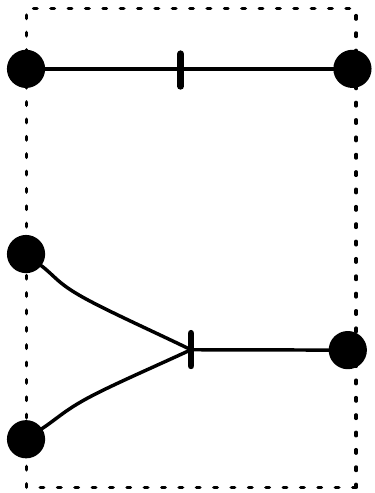}$}
\ = \ 
\lower15pt\hbox{$\includegraphics[width=1.2cm]{codiag}$} 
\ \comp \ 
\lower15pt\hbox{$\includegraphics[width=1.2cm]{diag}$} 
\ = \ 
\lower15pt\hbox{$\includegraphics[width=1.2cm]{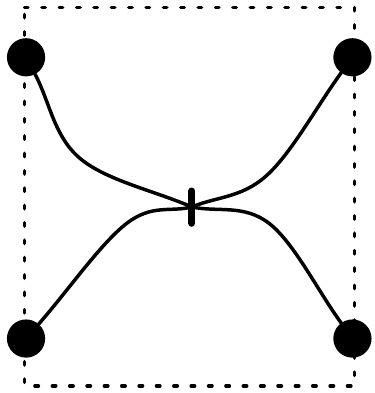}$} 
\end{equation}
%
%
\begin{equation}\label{eq:separable}\tag{S}
\lower19pt\hbox{$\includegraphics[width=1.5cm]{diag}$} \ \comp \ 
\lower19pt\hbox{$\includegraphics[width=1.5cm]{codiag}$} \ = \  \lower11pt\hbox{$\includegraphics[width=1.5cm]{id}$} 
\end{equation}
\begin{equation}\label{eq:compactclosed}\tag{CC}
\lower17pt\hbox{$\includegraphics[width=1.5cm]{codiag}$} \ \comp \ 
\lower9pt\hbox{$\includegraphics[width=1.5cm]{rightend}$} \ = \ 
\lower17pt\hbox{$\includegraphics[width=1.5cm]{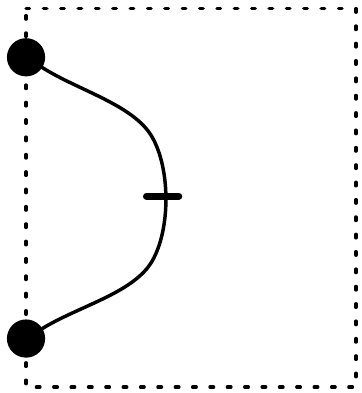}$}
\end{equation}
\caption{Equations in $\Csp{\Setf}$.\label{fig:csp}}
\end{figure}
In \figref{fig:csp} we give some of the equations satisfied by the algebra generated from the components~\eqref{eq:csp} and~\eqref{eq:common} in $\Csp{\Setf}$: \eqref{eq:cspcomonoid} and~\eqref{eq:cspassoc} show that $\diagsym$  is the comultiplication of a cocommutative comonoid. The symmetric equations hold for $\codiagsym$, meaning that it is part of a commutative monoid structure. The Frobenius axioms~\eqref{eq:frobenius}~\cite{Carboni1987,Kock2003} hold, and the algebra is separable~\eqref{eq:separable}. In fact $\Csp{\Setf}$ is the free PROP on \eqref{eq:csp} satisfying such axioms, where~\eqref{eq:frobenius}, \eqref{eq:separable} can be understood as witnessing a distributive law of PROPs; see~\cite{Lack2004a} for the details.
In~\eqref{eq:compactclosed} we indicate how the (self dual) compact closed structure of $\Csp{\Setf}$ arises.


\subsection{The algebra of $\Sp{\Setf}$}


\begin{figure}
\begin{equation}\label{eq:spcomonoid}\tag{$\ldiagsym$UC}
\begin{tabular}{ c c c c c }
\lower19pt\hbox{$\includegraphics[width=1.5cm]{ldiag}$} & ; &
\lower19pt\hbox{$\includegraphics[width=1.5cm]{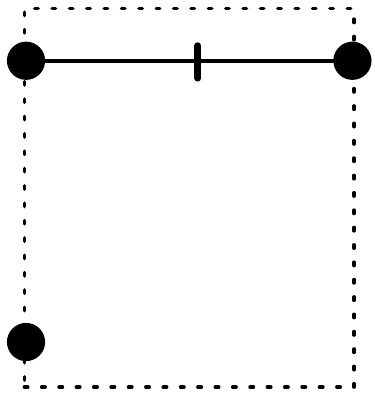}$} & = & \lower10pt\hbox{$\includegraphics[width=1.5cm]{id}$} \\
\end{tabular}
\ 
\begin{tabular}{ c c c c c}
\lower19pt\hbox{$\includegraphics[width=1.5cm]{ldiag}$} & ; &
\lower19pt\hbox{$\includegraphics[width=1.5cm]{tw}$} & = &
\lower19pt\hbox{$\includegraphics[width=1.5cm]{ldiag}$} \\
\end{tabular}
\end{equation}
\begin{equation}\label{eq:spassoc}\tag{$\ldiagsym$A}
\lower18pt\hbox{$\includegraphics[width=1.5cm]{ldiag}$} \ \comp \ 
\lower30pt\hbox{$\includegraphics[width=1.5cm]{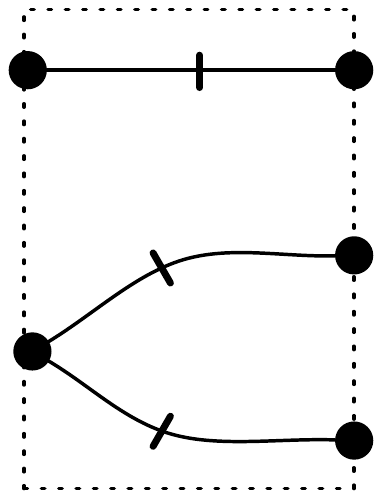}$}
\quad = \quad
\lower30pt\hbox{$\includegraphics[width=1.3cm]{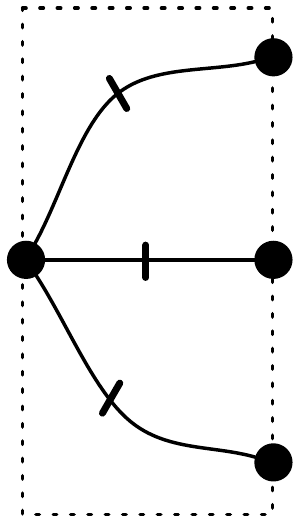}$}
\quad = \quad
\lower20pt\hbox{$\includegraphics[width=1.5cm]{ldiag}$} \ \comp \ 
\lower22pt\hbox{$\includegraphics[width=1.5cm]{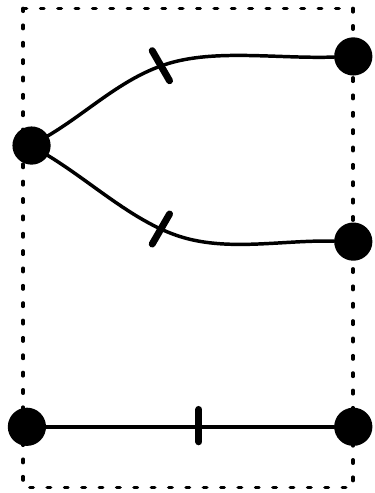}$}
\end{equation}
\begin{equation}\label{eq:bialgebra}\tag{B}
\lower37pt\hbox{$\includegraphics[width=1.4cm]{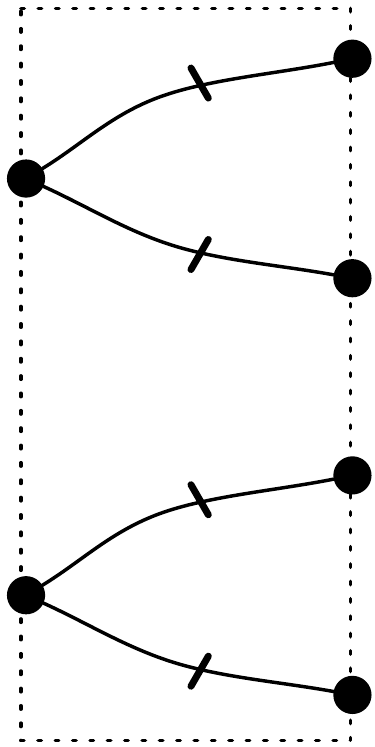}$}
\ \comp\ 
\lower37pt\hbox{$\includegraphics[width=1.4cm]{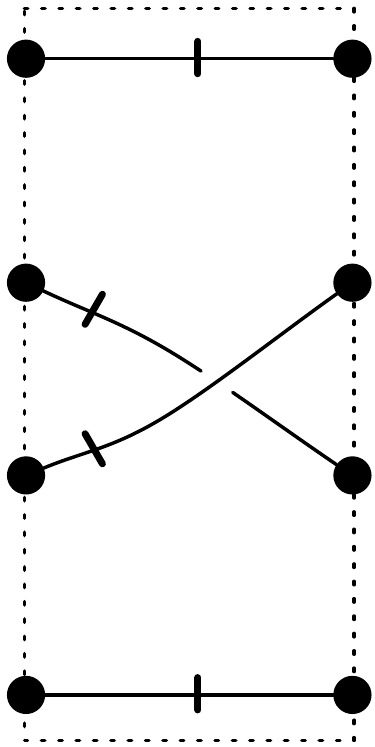}$}
\ \comp\ 
\lower37pt\hbox{$\includegraphics[width=1.4cm]{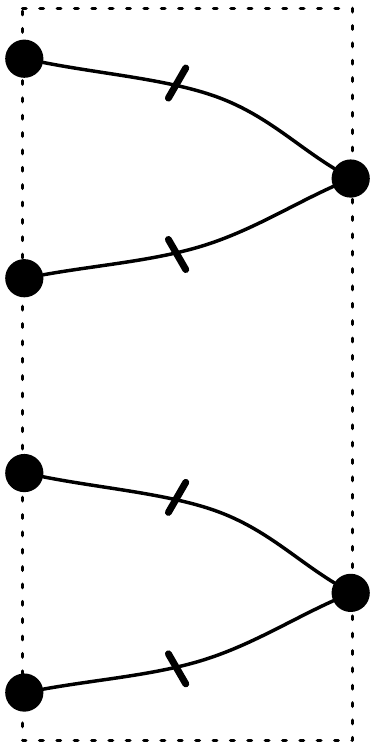}$}
\  = \ 
\lower20pt\hbox{$\includegraphics[width=1.5cm]{lcodiag}$} 
\comp 
\lower20pt\hbox{$\includegraphics[width=1.5cm]{ldiag}$}
\  = \ 
\lower20pt\hbox{$\includegraphics[width=1.5cm]{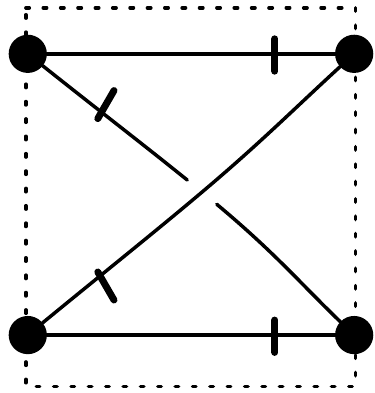}$} 
\end{equation}
\begin{equation}\label{eq:reverseunit}\tag{$\lcodiagsym\!\!\rzerosym$}
\lower20pt\hbox{$\includegraphics[width=1.5cm]{lcodiag}$} \comp
\lower10pt\hbox{$\includegraphics[width=1.5cm]{rzero}$}  = 
\lower20pt\hbox{$\includegraphics[width=1.5cm]{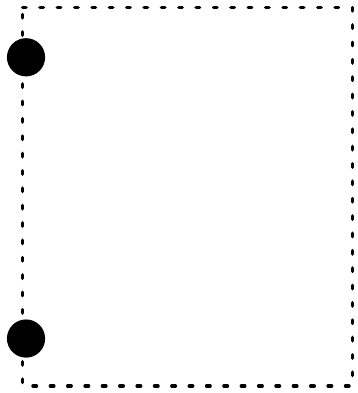}$}
\end{equation}
\begin{equation}\label{eq:diagcodiag}\tag{$\ldiagsym\lcodiagsym$}
\lower20pt\hbox{$\includegraphics[width=1.5cm]{ldiag}$} \comp
\lower20pt\hbox{$\includegraphics[width=1.5cm]{lcodiag}$} =
\lower13pt\hbox{$\includegraphics[width=1.5cm]{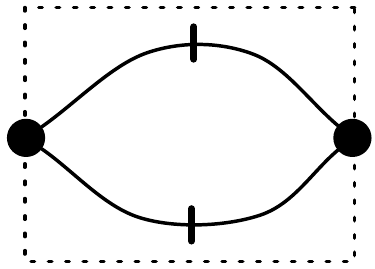}$} 
\end{equation}
\caption{Equations in $\Sp{\Setf}$.\label{fig:sp}}
\end{figure}
In \figref{fig:sp} we exhibit some equations satisfied by the components~\eqref{eq:sp} and~\eqref{eq:common} in $\Sp{\Setf}$: \eqref{eq:spcomonoid} and~\eqref{eq:spassoc} show that $\ldiagsym$ is the multiplication of a cocommutative comonoid, similarly the symmetric equations, which we do not illustrate, show that that $\lcodiagsym$ is a commutative monoid. Differently from \figref{fig:csp}, here the Frobenius equations do not hold; but rather the equations of commutative and cocommutative bialgebras: in \eqref{eq:bialgebra}, \eqref{eq:reverseunit} and \eqref{eq:diagcodiag} we show how the monoid and comonoid structures interact in $\Sp{\Setf}$. In fact, $\Sp{\Setf}$ is the free PROP on \eqref{eq:sp} satisfying the equations of commutative and cocommutative bialgebras, and the bialgebra axiom can be understood as a distributive law of PROPs, see~\cite{Lack2004a}.

\subsection{Bringing it all together}
Note that none of the diagrams in \eqref{eq:csp} represent valid spans: for instance the link in $\diagsym$ connects to two different ports on its right boundary, and the link in $\rightEndsym$ does not connect to any port on its right boundary. Similarly, none of \eqref{eq:sp} represent valid cospans. Thus, for mundane ``expressivity'' reasons, \eqref{eq:csp} are not arrows of $\Sp{\Setf}$, and vice-versa, \eqref{eq:sp} are not arrows of $\Csp{\Setf}$.
The remit of this paper is to study how these two commutative monoid-comonoid structures 
interact together in universes that are expressive enough to accommodate them.

For example, instead of studying cospans and spans of \emph{functions}, one could consider spans (or cospans) of \emph{relations}. Indeed, it is not difficult to check that all of the components~\eqref{eq:csp}, \eqref{eq:sp} and~\eqref{eq:common} are spans of relations of finite sets. The problem, of course, is that $\Rel_f$, the category of finite sets and relations, does not have pullbacks nor pushouts: it is thus not clear how to define the composition of such linking diagrams.

In the following sections we study two different universes that are expressive enough to contain~\eqref{eq:csp}, \eqref{eq:sp} and \eqref{eq:common} and the intriguing, different ways in which the two monoid/comonoid structures interact in the universes. They arose through the study of compositional algebras of Petri nets with boundaries~\cite{Soboci'nski2010,Bruni2011,Bruni2013,Sobocinski2013}.


\section{Sets with contention}
\label{sec:contention}


In this section we introduce \emph{sets with contention}, over which one can define a category of relations that has pullbacks, and is expressive enough to accommodate the components~\eqref{eq:csp}, \eqref{eq:sp} and~\eqref{eq:common}. 
\smallskip

A \emph{set with contention}, or $c$-set, is a pair $(X, \contention{X})$, where $X$ is a set and $\contention{X}\subseteq X\times X$ is a reflexive $(\forall x\in X.\; (x \contention{X} x))$ and symmetric $(\forall x,y\in X.\; (x \contention{X} y) \Rightarrow (y \contention{X} x))$ relation called \emph{contention}.\footnote{A useful intuition is that links carry signals. When two links are in contention they cannot transmit concurrently. With this intuition \eqref{eq:csp} are copy and forget operations, while 
\eqref{eq:sp} are non-deterministic switches and ``failure.''}
 The complement relation $\ind{X}$ is called \emph{independence}. To describe a $c$-set it is thus of course enough to specify either contention or independence. Sets with contention of the form $(X, \delta_X)$, where $\delta_X=\{\,(x,x)\;|\;x\in X\,\}$,  are said to be \emph{discrete}. We will normally write simply $X$ for the pair $(X,\contention{X})$.

A \emph{morphism of $c$-sets} $f:X\to Y$ is a function $f:X\to Y$ such that: 
\begin{equation}\label{eq:morphism}
\forall x,x'\in X.\; f(x)\contention{Y} f(x') \Rightarrow x \contention{X} x'
\end{equation}
 (or equivalently $\forall x,x'\in X$, $x \ind{X} x'$ implies $fx 
\ind{Y} fx'$.) The category of finite $c$-sets and their morphisms is denoted $\Setfc$.  

Given $c$-sets $X_0$ and $X_1$, $X_0+X_1$ is the $c$-set with $X_0+X_1$ as its underlying set and $(x,i)\contention{X_0+X_1}(y,j)$ iff $i=j$ and $x \contention{X_i} y$. This is the categorical coproduct in $\Setfc$.

Given a $c$-set $X$, $U\subseteq X$ is said to be \emph{independent} when 
\begin{equation}\label{eq:independent}
\forall u, u'\in U.\; u\contention{X} u'\ \Rightarrow\ u = u'.
\end{equation}
Let $\Pc X$ denote the set of independent subsets of $X$. 
There is functor $\Pc:\Setfc\to \Setfc$ that takes a $c$-set $X$ to the set of independent subsets $\Pc X$, with contention between subsets defined:
\[ U \contention{\Pc X} V  \text{ iff } \exists u\in U, v\in V, u \contention{X} v.\]
Note that independent subsets are closed under intersection and set difference: indeed, if $U'\subseteq U$ and $U$ is independent then also $U'$ is independent. They are not, in general, closed under union.

If $f:X\to Y$ is a morphism, then letting
\[ \Pc f (U) \Defeq \{\, fu \;|\; u\in U\,\} \]
defines a morphism $\Pc f : \Pc X \to \Pc Y$ in $\Setfc$, since: 
\begin{enumerate}[(i)]
\item given $U$, for all $u, u'\in U$ if $f(u)\contention{Y}f(u')$ means that $u\contention{X} u'$. But $U$ is independent, and thus $u=u'$ and $f(u)=f(u')$, thus $\Pc f (U)$ is an independent subset of $Y$ (recall~\eqref{eq:independent}).
\item if $\Pc f (U) \contention{Y} \Pc f (V)$ then there exists $u\in U$, $v\in V$, such that $f(u)\contention{Y} f(v)$, so $u\contention{X} v$ and thus $U\contention{X} V$, thus
$\Pc f$ satisfies~\eqref{eq:morphism}.
\end{enumerate}


\subsection{Relations with contention}


There are morphisms
$\mu_X : \Pc^2 X \to \Pc X$ with
$\{U_i\} \mapsto \bigcup_i U_i$
and a morphism $\eta_X: X \to \Pc X$. It
is not difficult to check that they are natural transformations that satisfy the monad axioms.

Let $\Relfc \Defeq \Kl{\Pc}$ of relations with contention, or $c$-relations, be the Kleisli category with objects finite $c$-sets. Arrows from $X$ to $Y$ are morphisms $f: X \to \Pc Y$ in $\Setfc$, which we will sometimes denote $f: X \relto Y$. Given a morphism $f: X \to \Pc Y$ in $\Setfc$ (or equivalently, a morphism of $\Relfc$), $\lift{f}: \Pc X\to \Pc Y$ is the morphism $\lift{f}U\Defeq \bigcup_{u\in U}fu $.

The following lemma is useful when calculating in $\Relfc$. It does not hold in $\Rel_f$, the category of ordinary finite sets and relations.
\begin{lem}\label{lem:difference}
Suppose $f:X \to \Pc Y$ in $\Setfc$.
Then, given $U,U'\in \Pc X$ with $U\subseteq U'$, $\lift{f}(U'\backslash U)=\lift{f}(U')\backslash\lift{f}(U)$. Also, given $U, V, V'\in \Pc X$, with $V\subseteq U$, $V'\subseteq U$, we have $\lift{f}(V\cap V')=\lift{f}(V)\cap \lift{f}(V')$.
\end{lem}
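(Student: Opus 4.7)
The plan is to extract a single ``local injectivity'' observation and then let both claims follow by routine element-chasing: \emph{if $u, u' \in X$ lie in a common independent subset and $f(u) \cap f(u') \neq \emptyset$, then $u = u'$}. This is exactly what the $c$-structure buys us over $\Rel_f$, and it is the only non-trivial ingredient.

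To prove the observation, I would take $u, u' \in U$ with $U \in \Pc X$ and $y \in f(u) \cap f(u')$. Since $\contention{Y}$ is reflexive, $y \contention{Y} y$, and then by the definition of contention on $\Pc Y$ one has $f(u) \contention{\Pc Y} f(u')$. Applying the morphism condition \eqref{eq:morphism} to $f: X \to \Pc Y$ gives $u \contention{X} u'$, and independence of $U$, via \eqref{eq:independent}, forces $u = u'$.

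With this in hand, both equalities are straightforward. For the difference: the inclusion $\supseteq$ is immediate, since if $y \in f(u')$ with $u' \in U'$ and $y \notin \lift{f}(U)$ then necessarily $u' \in U' \setminus U$. For $\subseteq$, take $y \in f(u'')$ with $u'' \in U' \setminus U$; if also $y \in f(u)$ for some $u \in U \subseteq U'$, then $u, u''$ sit in the independent set $U'$ and share the image element $y$, so the observation gives $u = u'' \in U$, a contradiction. For the intersection, $\subseteq$ is immediate; for $\supseteq$, take $y \in f(v) \cap f(v')$ with $v \in V$ and $v' \in V'$. Since $V, V' \subseteq U$ and $U$ is independent, the observation yields $v = v'$, hence $v \in V \cap V'$ and $y \in \lift{f}(V \cap V')$.

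There is no real obstacle once the observation is spotted; the instructive point is to see why the hypotheses are \emph{exactly right}. In $\Rel_f$ nothing prevents distinct $u, u''$ from sharing an image point, which is why the analogous identities fail there. Here, the assumption that $U \subseteq U'$ with $U'$ independent (respectively, $V, V' \subseteq U$ with $U$ independent) is precisely what places the relevant pairs of preimages inside a common independent set, triggering the local-injectivity observation.
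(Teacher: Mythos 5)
Your proposal is correct and follows essentially the same route as the paper: the paper's one-line proof rests on the observation that $\{f(u)\}_{u\in U'}$ (resp.\ $\{f(u)\}_{u\in V\cup V'}$) is a family of pairwise disjoint sets, which is exactly your ``local injectivity'' claim, proved the same way via reflexivity of $\contention{Y}$, the definition of $\contention{\Pc Y}$, the morphism condition \eqref{eq:morphism}, and independence. You merely spell out the element-chasing that the paper compresses into ``Disjointness implies the desired conclusions.''
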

\begin{proof}
Since $U'$ is independent, $\{f u \}_{u\in U'}$ is a family of
disjoint, independent subsets of $Y$. Similarly $V\cup V'$ is independent, 
since they are both subsets of an independent set; and $\{f u\}_{u\in V\cup V'}$
is a family of disjoint, independent subsets of $Y$.
Disjointness implies the desired conclusions. \qed
\end{proof}


\subsection{Pullbacks in $\Relfc$}
\label{subsec:synchronisations}


Suppose that $f: A \relto X$ and 
$g: B \relto X$ in $\Relfc$.
Given $U\in \Pc A$, $V\in  \Pc B$, say that $(U,V)$ is a ($f$,$g$)-\emph{synchronisation}\footnote{Hughes~\cite{Hughes2008} uses the term synchronisation in a similar context, and the term has been used in~\cite{Soboci'nski2010,Bruni2011,Bruni2013} to compose Petri nets with boundaries.} if $\lift{f}U = \lift{g}V$. We will typically infer $f$ and $g$ from the context and write `$\synch{U}{V}$' as shorthand for `a synchronisation $(U,V)$'. 
Synchronisations inherit an ordering from the subset ordering, pointwise: \[
\synch{U}{V}\subseteq \synch{U'}{V'} \ \Defeq\  U\subseteq U' \wedge V\subseteq V'.\]
 The \emph{trivial} synchronisation is $\synch{\varnothing}{\varnothing}$. A synchronisation $\synch{U}{V}$ is said to be \emph{minimal} when it is not trivial and for all $\synch{U'
}{V'}$ such that $\synch{U'}{V'}\subseteq \synch{U}{V}$, either $\synch{U'}{V'}$ is trivial or equal to $\synch{U}{V}$.

Let $\minsynch{f}{g}$ be the set of minimal synchronisations of $f$ and $g$. We can define contention on this set by letting 
\[
\synch{U}{V} \contention{\minsynch{f}{g}} \synch{U'}{V'} \quad\Defeq\quad  U\contention{\P A} U' \ \vee\  V\contention{\P B} V'.
\]
It follows that have the following commutative diagram in $\Relfc$
\begin{equation}\label{eq:pb}
\raise3.5em\hbox{$
\xymatrix@=15pt{
 & \minsynch{f}{g} \ar[dl]_-p \ar[dr]^-q \\
 A \ar[dr]_f & & B  \ar[dl]^g \\
 & X 
}$}
\end{equation}
where $p\synch{U}{V}=U$ and $q\synch{U}{V}=V$. The following observations will lead us to conclude in Lemma~\ref{lem:pb} that the diagram is a pullback in $\Relfc$. 


Synchronisations are not in general closed under (pointwise) union, because if $\synch{U}{V}$ and $\synch{U'}{V'}$ then in general it is not true that $U\cup U'\in \Pc A$ and $V\cup V' \in \Pc B$. It is true, however, that the union of any set of minimal synchronisations contained in any synchronisation is again a synchronisation: this is guaranteed by the following.
\begin{lem}\label{lem:synchintersection}
Suppose that $\synch{U'}{V'}\neq \synch{U''}{V''}$ are 
minimal synchronisations contained in $\synch{U}{V}$. Then $U'\cap U''=\varnothing$
and $V' \cap V'' = \varnothing$.
\end{lem}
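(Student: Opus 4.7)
My plan is to exhibit $\synch{U'\cap U''}{V'\cap V''}$ as a synchronisation contained in both $\synch{U'}{V'}$ and $\synch{U''}{V''}$, and then invoke minimality twice.

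First I verify that $(U'\cap U'', V'\cap V'')$ is a synchronisation. Since $U',U''\subseteq U$ with $U$ independent, the intersection $U'\cap U''$ is a subset of the independent set $U'$, so $U'\cap U''\in\Pc A$; analogously $V'\cap V''\in\Pc B$. The hypothesis of Lemma~\ref{lem:difference} is satisfied with ambient independent sets $U$ and $V$, so
\[
\lift{f}(U'\cap U'') \,=\, \lift{f}(U')\cap \lift{f}(U'') \,=\, \lift{g}(V')\cap\lift{g}(V'') \,=\, \lift{g}(V'\cap V''),
\]
where the middle equality uses that $\synch{U'}{V'}$ and $\synch{U''}{V''}$ are synchronisations. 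Thus $\synch{U'\cap U''}{V'\cap V''}$ is a synchronisation, and by construction it is contained in $\synch{U'}{V'}$.

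By minimality of $\synch{U'}{V'}$, either $\synch{U'\cap U''}{V'\cap V''}$ is trivial, or it equals $\synch{U'}{V'}$. In the latter case, $U'\subseteq U''$ and $V'\subseteq V''$, so $\synch{U'}{V'}\subseteq\synch{U''}{V''}$; since $\synch{U'}{V'}$ is nontrivial, minimality of $\synch{U''}{V''}$ forces $\synch{U'}{V'}=\synch{U''}{V''}$, contradicting the hypothesis. Therefore the intersection is trivial, giving $U'\cap U''=\varnothing$ and $V'\cap V''=\varnothing$.

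The only non-bookkeeping step is recognising that Lemma~\ref{lem:difference} is exactly what is needed to show that the pointwise intersection of two synchronisations is again a synchronisation; without the intersection identity for $\lift{f}$ (which fails in $\Rel_f$) we would have no obvious way to produce a smaller synchronisation to which minimality could be applied. Once that observation is in place the proof is a short minimality argument.
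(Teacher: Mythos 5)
Your proposal is correct and follows essentially the same route as the paper: use the intersection identity of Lemma~\ref{lem:difference} (with $U$ and $V$ as the ambient independent sets) to show that $\synch{U'\cap U''}{V'\cap V''}$ is a synchronisation, then conclude triviality from the minimality of both $\synch{U'}{V'}$ and $\synch{U''}{V''}$. Your write-up merely makes explicit two points the paper leaves implicit --- that intersections of independent sets are independent, and the case analysis by which equality with $\synch{U'}{V'}$ would contradict $\synch{U'}{V'}\neq\synch{U''}{V''}$ --- which is a faithful elaboration rather than a different argument.
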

\begin{proof}
By the conclusion of Lemma~\ref{lem:difference}, 
$\lift{f}(U\cap U') = \lift{f}U \cap \lift{f} U' = \lift{g} V \cap \lift{g} V' = \lift{g} (V \cap V')$, so $\synch{U'\cap U''}{V'\cap V''}$; by minimality of $\synch{U'}{V'}$ and $\synch{U''}{V''}$ it follows that $\synch{U'\cap U''}{V'\cap V''}$ is trivial. \qed
\end{proof}

\begin{lem}\label{lem:synchdecomposition}
$\synch{U}{V}$ is the union of min. synchronisations it contains.
\end{lem}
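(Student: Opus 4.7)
The plan is to argue by induction on $n=|U|+|V|$. The base case is $n=0$, where $\synch{U}{V}=\synch{\varnothing}{\varnothing}$ is trivial and hence equals the empty union of minimal synchronisations contained in it. For the inductive step I would assume $\synch{U}{V}$ is non-trivial, exhibit a minimal synchronisation $\synch{U_0}{V_0}\subseteq\synch{U}{V}$ (taking, for instance, any synchronisation contained in $\synch{U}{V}$ of least positive ``size'' $|U_0|+|V_0|>0$; the hypothesis $\synch{U}{V}\neq\synch{\varnothing}{\varnothing}$ guarantees that such a candidate exists), and then ``subtract'' it.

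The central step is to verify that $\synch{U\setminus U_0}{V\setminus V_0}$ is again a synchronisation. This is exactly what Lemma~\ref{lem:difference} delivers: since $U_0\subseteq U$ and $V_0\subseteq V$ lie in $\Pc A$ and $\Pc B$ respectively, one computes
\[
\lift{f}(U\setminus U_0)=\lift{f}U\setminus\lift{f}U_0=\lift{g}V\setminus\lift{g}V_0=\lift{g}(V\setminus V_0),
\]
using that $\lift{f}U=\lift{g}V$ and $\lift{f}U_0=\lift{g}V_0$. Since $\synch{U_0}{V_0}$ is non-trivial, $|U\setminus U_0|+|V\setminus V_0|<n$, so the inductive hypothesis applies and yields that $\synch{U\setminus U_0}{V\setminus V_0}$ is the union of the minimal synchronisations it contains. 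Every such minimal synchronisation is, a fortiori, contained in $\synch{U}{V}$; taking the union of these together with $\synch{U_0}{V_0}$ recovers $\synch{U}{V}$ componentwise.

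The only point that needs care is ensuring that ``minimum-size positive synchronisation contained in $\synch{U}{V}$'' is indeed minimal in the sense of the text; this is immediate from the definition because any non-trivial synchronisation strictly contained in it would have strictly smaller size and contradict minimality of size. Lemma~\ref{lem:synchintersection} is not actually needed for the present statement, though it fits naturally alongside it by making the union essentially disjoint. I expect the main obstacle to be a clean treatment of the subtraction step: one must keep track that $U\setminus U_0$ and $V\setminus V_0$ remain in $\Pc A$ and $\Pc B$ (which follows because independent subsets are closed under set difference, as noted just before Lemma~\ref{lem:difference}) so that Lemma~\ref{lem:difference} is applicable.
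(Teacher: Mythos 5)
Your argument is correct, and it reaches the same key step as the paper --- namely that Lemma~\ref{lem:difference} lets you subtract a sub-synchronisation and stay within the class of synchronisations --- but it organises the bookkeeping differently. The paper forms the union $\synch{U'}{V'}$ of \emph{all} minimal synchronisations contained in $\synch{U}{V}$ in one step, applies Lemma~\ref{lem:difference} to see that the remainder $\synch{U\setminus U'}{V\setminus V'}$ is a synchronisation, and concludes it must be trivial since any minimal synchronisation inside it would already have been counted. You instead peel off one minimal synchronisation at a time by induction on $|U|+|V|$, exhibiting a minimal one as a non-trivial sub-synchronisation of least positive size (which exists by finiteness, and which is genuinely minimal in the paper's sense because any non-trivial sub-synchronisation of it would have the same size and hence coincide with it). Your route has the small advantage of not needing to know in advance that the union of all minimal synchronisations contained in $\synch{U}{V}$ is itself a synchronisation --- the point the paper routes through Lemma~\ref{lem:synchintersection} --- so your observation that Lemma~\ref{lem:synchintersection} is dispensable here is accurate; its real role is in the uniqueness half of the pullback property in Lemma~\ref{lem:pb}. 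The paper's version is shorter once that union is available; yours is more self-contained. Both correctly use closure of independent sets under set difference to keep $U\setminus U_0\in\Pc A$ and $V\setminus V_0\in\Pc B$, so Lemma~\ref{lem:difference} applies.
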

\begin{proof}
Let $\{\synch{U_i}{V_i}\}_{i\in I}$ be the set of minimal synchronisations contained in $\synch{U}{V}$ and
$\synch{U'}{V'} \Defeq \bigcup_i \{\synch{U_i}{V_i}\}$, then
clearly we have $\synch{U'}{V'} \subseteq \synch{U}{V}$. 
Let $U''=U\backslash U'$ and  $V''=V\backslash V'$. 
Now, using the conclusion of Lemma~\ref{lem:difference},
$\synch{U''}{V''}$, and thus it is either null or it contains a minimal synchronisation. But $\{\synch{U_i}{V_i}\}_{i\in I}$ contains all minimal synchronisations in $\synch{U}{V}$; thus $U''=V''=\varnothing$ and we are finished. \qed
\end{proof}

\begin{lem}\label{lem:pb}
The square~\eqref{eq:pb} is a pullback diagram in $\Relfc$.
\end{lem}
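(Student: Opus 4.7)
The plan is to verify in turn that \eqref{eq:pb} is a commuting square in $\Relfc$, then construct the mediator for the universal property and prove it unique. Commutativity is immediate: unwinding Kleisli composition, $(f\comp p)\synch{U}{V}=\lift{f}U$ and $(g\comp q)\synch{U}{V}=\lift{g}V$, which agree by the very definition of synchronisation. That $p$ and $q$ are morphisms of $c$-sets from $\minsynch{f}{g}$ to $\Pc A$ and $\Pc B$ respectively is a direct reading of the contention relation imposed on $\minsynch{f}{g}$.

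For the universal property, suppose $h\colon C\relto A$ and $k\colon C\relto B$ satisfy $f\comp h=g\comp k$; equivalently, $\lift{f}(hc)=\lift{g}(kc)$, i.e.\ $\synch{hc}{kc}$ is an $(f,g)$-synchronisation for every $c\in C$. The candidate mediator $m\colon C\relto \minsynch{f}{g}$ is
\[ m(c)\;\Defeq\;\{\,s\in\minsynch{f}{g}\,\mid\,s\subseteq \synch{hc}{kc}\,\}. \]
I would then check four things. First, $m(c)\in \Pc(\minsynch{f}{g})$: if $s_1\neq s_2\in m(c)$ were in contention via, say, $u_1\contention{A} u_2$ with $u_i$ in the first component $U_i\subseteq hc$, then independence of $hc$ would force $u_1=u_2$, contradicting $U_1\cap U_2=\varnothing$ from Lemma~\ref{lem:synchintersection}; the $B$ side is symmetric. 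Second, $m$ is a $c$-set morphism: any contention between $m(c)$ and $m(c')$ in $\Pc(\minsynch{f}{g})$ propagates to a contention between $hc$ and $hc'$ or between $kc$ and $kc'$, forcing $c\contention{C}c'$ since $h$ and $k$ are morphisms. Third, the commutativity equations $\lift{p}\circ m=h$ and $\lift{q}\circ m=k$ are exactly the conclusion of Lemma~\ref{lem:synchdecomposition}.

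The main obstacle is uniqueness. Given another mediator $m'$, any $s=\synch{U_s}{V_s}\in m'(c)$ satisfies $U_s\subseteq\lift{p}(m'(c))=hc$ and $V_s\subseteq kc$, so $s\in m(c)$ and hence $m'(c)\subseteq m(c)$. The reverse inclusion is delicate because minimal synchronisations can be degenerate on one side (e.g.\ $\synch{\varnothing}{\{v\}}$ when $gv=\varnothing$), so one cannot argue from either component alone. Instead, suppose $s\in m(c)\setminus m'(c)$; by Lemma~\ref{lem:synchintersection} applied inside $\synch{hc}{kc}$, $U_s$ is disjoint from $U_t$ for every other $t\in m(c)\supseteq m'(c)$, yet $U_s\subseteq hc=\bigcup_{t\in m'(c)}U_t$, forcing $U_s=\varnothing$. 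Running the same argument on the $B$ side yields $V_s=\varnothing$, contradicting nontriviality of $s$. This two-sided use of Lemma~\ref{lem:synchintersection} is the crux of the proof.
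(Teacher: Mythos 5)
Your proof is correct and follows essentially the same route as the paper: the same mediating morphism (sending $c$ to the set of minimal synchronisations contained in $\synch{hc}{kc}$), with independence and uniqueness via Lemma~\ref{lem:synchintersection} and the factorisation equations via Lemma~\ref{lem:synchdecomposition}. Your write-up is in fact more complete than the paper's, since you explicitly verify the $c$-set morphism condition for the mediator and spell out the two-sided disjointness argument that the paper's one-line uniqueness claim leaves implicit.
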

\begin{proof}
Suppose $Z$ is a $c$-set and $\alpha:Z \to A$, $\beta: Z \to B$ are morphisms in $\Relfc$ such that $f\alpha = g\beta$. In particular, this means that for all $z\in Z$, we have
$\synch{\alpha z}{\beta z}$. Define $h: Z \to \minsynch{f}{g}$ by letting $h z$ be the family of minimal synchronisations contained in $\synch{\alpha z}{\beta z}$. This is a independent set, due to Lemma~\ref{lem:synchintersection}, and the fact that $\alpha z$ and $\beta z$ are independent. Then, by the conclusion of Lemma~\ref{lem:synchdecomposition}, $p h = \alpha$ and $q h = \beta$. 

If another $h'$ satisfies $p h' = \alpha$ and $q h' = \beta$ then there exists a family of minimal synchronisations $h' z = \{\synch{U_i}{V_i}\}_{i\in I}$ such that 
$\bigcup_i U_i = \alpha z$ and $\bigcup_i V_i = \beta z$. By the conclusion of Lemma~\ref{lem:synchintersection} this family must be $h z$. \qed
\end{proof}


\section{The algebra of $\Sp{\Relfc}$}
\label{sec:linkingdiagscontention}


In this section we consider a category with enough structure for all of \eqref{eq:csp}, \eqref{eq:sp} and \eqref{eq:common}. It has been considered as part of a compositional algebra of C/E (1 bounded) nets~\cite{Soboci'nski2010,Bruni2013}---indeed, it is the category of C/E nets with boundaries, without net places, up to isomorphism.
\smallskip

Consider the category $\Sp{\Relfc}$, that has objects the natural numbers
and arrows $k \to l$ isomorphism classes of spans $k \xrelleftarrow{f} (X,\contention{X}) \xrelrightarrow{g} l$ in $\Relfc$, 
where $k$ and $l$ are considered as discrete $c$-sets. 
Composition is via pullback in $\Relfc$; associativity follows from the universal property. There is a tensor product, given by $+$.

$\Sp{\Relfc}$ has enough structure for~\eqref{eq:csp}, \eqref{eq:sp} and~\eqref{eq:common}. Indeed, \eqref{eq:csp} are, respectively, spans $1\xrelleftarrow{[\id]} 1 \xrelrightarrow{[!]^\op} 2$, 
$1\xrelleftarrow{[\id]} 1 \xrelrightarrow{[!]^\op} 0$, $2 \xrelleftarrow{[!]^\op} 1 \xrelrightarrow{[\id]} 1$
and $0\xrelleftarrow{[!]^\op} 1 \xrelrightarrow{[\id]} 1$. Similarly, \eqref{eq:common} are spans $1 \xrelleftarrow{[\id]} 1 \xrelrightarrow{[\id]}$ and $2 \xrelleftarrow{[\id]} 2 \xrelrightarrow{[tw]} 2$.
Indeed, $\Csp{\Setf}$ embeds into $\Sp{\Relfc}$.
\begin{theorem}
There is a faithful functor $E:\Csp{\Setf} \to \Sp{\Relfc}$ that is identity-on-objects.
\end{theorem}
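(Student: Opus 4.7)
The plan is to set $E$ to be identity on objects and to send a cospan $k \xrightarrow{f} x \xleftarrow{g} l$ to the span $k \xrelleftarrow{[f]^{\op}} (x,\delta_x) \xrelrightarrow{[g]^{\op}} l$, where $x$ carries the discrete contention and $[f]^{\op}:x\to\Pc k$ is the map $a\mapsto f^{-1}(a)$. These are valid morphisms into the Kleisli object: $x$ is discrete, so~\eqref{eq:morphism} is vacuous, and $k,l$ are discrete, so every subset is independent, ensuring $f^{-1}(a)\in \Pc k$. A cospan isomorphism $\phi:x_1\to x_2$ is a bijection, hence an iso of discrete $c$-sets, and the commuting cospan triangles translate directly into commuting span triangles, so $E$ is well defined on isomorphism classes.

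Identity preservation is immediate: the identity cospan on $k$ maps to $k\xrelleftarrow{\eta_k} k \xrelrightarrow{\eta_k} k$, the identity span in $\Sp{\Relfc}$. The substance of functoriality lies in composition. Given composable cospans $k \xrightarrow{f_1} x_1 \xleftarrow{g_1} l$ and $l \xrightarrow{f_2} x_2 \xleftarrow{g_2} m$, I plan to exhibit a bijection $\theta : \minsynch{[g_1]^{\op}}{[f_2]^{\op}} \to x_1 +_l x_2$ that intertwines the pullback projections $p, q$ of Lemma~\ref{lem:pb} with the opposites $[i_1]^{\op}, [i_2]^{\op}$ of the pushout coprojections. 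Form the bipartite graph on $x_1 \sqcup x_2$ with an edge $(g_1(i), f_2(i))$ for each $i \in l$; its connected components are precisely the equivalence classes of the pushout. A component $C$ yields a synchronisation $\synch{C \cap x_1}{C \cap x_2}$, since $g_1^{-1}(C \cap x_1) = f_2^{-1}(C \cap x_2)$ (both equal $\{i \in l : g_1(i), f_2(i) \in C\}$), and minimality follows from connectedness of $C$; by Lemmas~\ref{lem:synchintersection} and~\ref{lem:synchdecomposition}, every minimal synchronisation arises uniquely in this way. Lemma~\ref{lem:synchintersection} also forces $\minsynch{[g_1]^{\op}}{[f_2]^{\op}}$ to be discrete, so $\theta$ is an iso of $c$-sets; a short calculation then verifies $[i_1]^{\op}\theta = p$, $[i_2]^{\op}\theta = q$, whence the outer legs $[f_1]^{\op}p$ and $[g_2]^{\op}q$ of the pullback composite match $[i_1 f_1]^{\op}$ and $[i_2 g_2]^{\op}$.

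For faithfulness, suppose $E(c_1) = E(c_2)$ with $c_i = (k \xrightarrow{f_i} x_i \xleftarrow{g_i} l)$, and let $\phi : x_1 \to \Pc x_2$ witness a span isomorphism in $\Relfc$. A standard check in $\Kl{\Pc}$ shows that an iso between discrete $c$-sets must send each element to a singleton, so $\phi(a) = \{\psi(a)\}$ for a bijection $\psi : x_1 \to x_2$. Unwinding $[f_2]^{\op}\phi = [f_1]^{\op}$ and $[g_2]^{\op}\phi = [g_1]^{\op}$ pointwise yields $\psi f_1 = f_2$ and $\psi g_1 = g_2$, which is precisely an isomorphism $c_1 \cong c_2$ of cospans.

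The principal obstacle is the composition step: verifying that the minimal-synchronisation pullback of Lemma~\ref{lem:pb}, when applied to the $E$-images of two composable cospans, recovers the set-theoretic pushout together with its universal inclusions. Everything else, including identity preservation, iso-class invariance, and the final faithfulness argument, is routine bookkeeping once that correspondence is in hand.
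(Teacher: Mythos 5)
Your proposal is correct, but the key step is organised differently from the paper's proof. For preservation of composition the paper argues by induction on the size of the pushout object $M$: it reduces to the case $M=1$ (using compatibility of sums with pullbacks in $\Relfc$) and there shows, via surjectivity of the pushout legs, that $\synch{x_0}{x_1}$ is the unique non-trivial (hence minimal) synchronisation, the degenerate cases $x_0=0$ or $x_1=0$ being handled separately. You instead give a direct global bijection: minimal synchronisations of $[g_1]^{\op}$, $[f_2]^{\op}$ correspond to connected components of the bipartite gluing graph on $x_1\sqcup x_2$ with an edge $(g_1(i),f_2(i))$ for each $i\in l$, i.e.\ to the elements of the pushout. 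The essential content is the same --- a non-trivial synchronisation $\synch{U}{V}$ has support $U\cup V$ closed under adjacency, so minimal synchronisations are exactly single components, and Lemmas~\ref{lem:synchintersection} and~\ref{lem:synchdecomposition} give disjointness and exhaustiveness --- but your version avoids the induction and the appeal to additivity of pullbacks, at the cost of having to argue directly that $\theta$ is an iso of $c$-sets (which you do, correctly, via discreteness from Lemma~\ref{lem:synchintersection}). You also spell out faithfulness (isos between discrete $c$-sets in $\Kl{\Pc}$ are graphs of bijections, since all subsets of a discrete $c$-set are independent and one is reduced to isos in $\Rel_f$), which the paper's proof leaves implicit; that is a worthwhile addition since faithfulness is part of the statement.
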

\begin{proof}
A cospan $k \xrightarrow{f} x \xleftarrow{g} l$ is
taken to $k \xrelleftarrow{[f]^\op} x \xrelrightarrow{[g]^\op} l$, where
$k$, $x$ and $l$ are discrete $c$-sets, and
$[f]^\op$, $[g]^\op$ are the opposites of graphs of, respectively, $f$ and $g$. 
As arrows in $\Kl{\Pc}$, 
$[f]^\op u={f}^{-1}u$ and $[g]^\op u={g}^{-1}u$ for any $u\in x$.
Identities are clearly preserved. 

We must show that composition is preserved; it suffices
to show that, given $g_0:l \to x_0$ and $f_1: l \to x_1$, a pushout diagram of $g_0$, $f_1$ in $\Setf$ is taken to a pullback diagram in $\Relfc$, as illustrated below.
\begin{equation}\label{eq:embeddingcsp}
\raise25pt\hbox{$
\xymatrix@=15pt{
& l \ar[dl]_{g_0} \ar[dr]^{f_1} \\
x_0 \ar[dr]_-r & & x_1 \ar[dl]^-s \\
& 
\ar@{}[uu]|(.2){\hbox{\rotatebox[origin=c]{270}{$\langle$}}}
M
}$}
\ 
\longmapsto
\ 
\raise25pt\hbox{$\xymatrix@=15pt{
& {M} \ar@{}[dd]|(.2){\hbox{\rotatebox[origin=c]{90}{$\langle$}}}
\ar[dl]_-{[r]^\op} \ar[dr]^-{[s]^\op}
\\
x_0 \ar[dr]_{[g_0]^\op} & & x_1 \ar[dl]^{[f_1]^\op} \\
 & l 
}$}
\end{equation}
If $M=0$ then also $x_0=x_1=l=0$ and all arrows are $\id_0$.
Otherwise, by an inductive argument it suffices to consider the case $M=1$. 
In that case, if $x_0=0$ then $x_1=1$ and $l=0$.
Then $\minsynch{[g_0]^\op}{[f_1]^\op}=\{\synch{\varnothing}{1}\}$ and
we are done. The case $x_1=0$ is symmetric. If both $x_0,x_1\neq 0$ then clearly
$\synch{x_0}{x_1}$. In fact, it is the only non-trivial synchronisation (and thus minimal). To see this, notice that $g_0$ and $f_1$ are surjective and therefore, if $\synch{U_1}{V_1}$ and $\synch{U_2}{V_2}$ are two different non-trivial synchronisations then $l_1\Defeq g_0^{-1}U_1=f_1^{-1}V_1\neq \varnothing$ and 
$l_2\Defeq g_0^{-1}U_2=f_1^{-1}V_2\neq \varnothing$, but $l_1\cap l_2=\varnothing$. This means that $l=l_1+l_2+l_3$, for some $l_3$, and the whole left hand side of~\eqref{eq:embeddingcsp} decomposes into a sum, 
contradicting the assumption that $M=1$.

The inductive argument relies on sums being compatible with pullbacks in $\Relfc$. This follows from the construction: minimal synchronisations of $x_0+x_0'\xrelrightarrow{[g_0+g_0']^{\op}} l+l'\xrelleftarrow{[f_0+f_1']^{\op}} x_1+x_1'$ arise either as a minimal synchronisations of $[g_0]^\op$ and $[f_1]^\op$,
or those of $[g_0']^\op$ and $[f_1']^\op$. 
\qed
\end{proof}
As a consequence, the equations for~\eqref{eq:csp} ---presented in~\eqref{eq:cspcomonoid}, \eqref{eq:cspassoc}, \eqref{eq:frobenius}, \eqref{eq:separable} and \eqref{eq:compactclosed}--- also hold in $\Sp{\Relfc}$.

Also \eqref{eq:sp} are spans of $c$-relations: 
$1 \xrelleftarrow{[!]} (2,2\t 2) \xrelrightarrow{[\id]} 2$, $1 \xrelleftarrow{[!]} 0 \xrelrightarrow{[\id]} 0$, $2 \xrelleftarrow{[\id]} (2,2\t 2) \xrelrightarrow{[!]} 1$ and $0 \xrelleftarrow{[\id]} 0 \xrelrightarrow{[!]} 1$; notice that contention is used to ``encode''~\eqref{eq:sp}. This is necessary because the two elements of $2$ must be in contention in order for $!:2\to 1$ to be a $c$-morphism. 
\begin{remark}
When considering, for instance $\ldiagsym$ of~\eqref{eq:sp} we are in a situation where two links connect to the same point on the boundary. Since any element is in contention with itself, this means that the two links must be in contention. Thus, in this example, contention between the two links is implied and we will not alter our graphical notation. We will, however, need a way to represent contention graphically when it is not implied ``structurally,'' and we will do this by connecting the links with dotted lines. For instance, the two diagrams below represent the spans $2 \xleftarrow{[\id]} (2,2\t 2) \xrightarrow{[\id]} 2$ and $2 \xleftarrow{[\id]} (2, 2\t 2) \xrightarrow{[\tw]} 2$.
\begin{equation}\label{eq:idandtwistcontention}
\lower19pt\hbox{$\includegraphics[width=1.5cm]{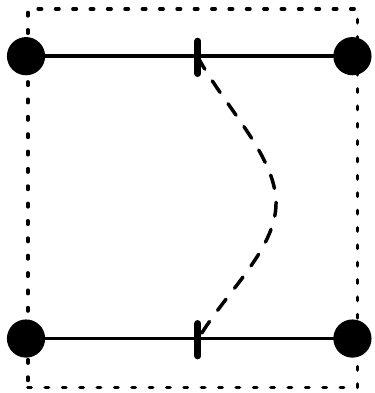}$} 
\qquad\qquad
\lower19pt\hbox{$\includegraphics[width=1.5cm]{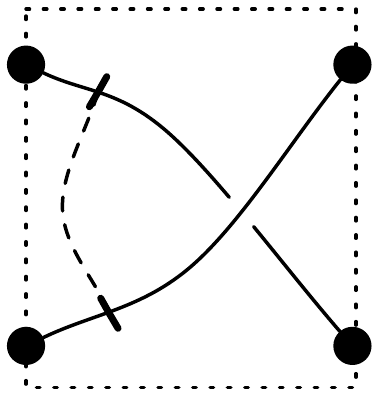}$}
\end{equation}
\end{remark}

\begin{remark}
There is also an ``embedding'' $F: \Sp{\Setf} \to \Sp{\Relfc}$. A span 
$k \xleftarrow{f} x \xrightarrow{g} l$ is sent to the span
$k \xrelleftarrow{[f]} (x,x\times x) \xrelrightarrow{[g]} l$,
with the carrier set having all elements in contention. It is not difficult to check that composition is preserved, but the mapping fails to be a functor because identities are not preserved. For instance, the identity on $2$ is mapped to the left diagram of~\eqref{eq:idandtwistcontention}, which is not the identity on $2$ in $\Sp{\Relfc}$.
\end{remark}

The finite fragment of Hughes' category $\mathsf{Link}$ of spans of \emph{injective} relations~\cite{Hughes2008} lies between $\Csp{\Set_f}$ and $\Sp{\Relfc}$. Indeed, spans of injective relations are expressive enough to consider all the structure of~\eqref{eq:csp}, \eqref{eq:common} and the units
$\rzerosym$, $\lzerosym$ of \eqref{eq:sp}; but not the comultiplication and multiplication $\ldiagsym$, $\lcodiagsym$ --- these are not injective relations. $\mathsf{Link}$ embeds into $\Sp{\Relfc}$, thus all the equations that hold in the former hold also in the latter. We omit the details here.

Equations~\eqref{eq:spcomonoid}, \eqref{eq:spassoc}, \eqref{eq:reverseunit} and \eqref{eq:diagcodiag} hold in in $\Sp{\Relfc}$. Equation~\eqref{eq:bialgebra} does not hold: while we have 
\begin{equation}\label{eq:fourwires}\tag{$B^c$}
\lower35pt\hbox{$\includegraphics[width=1.3cm]{ldiagtensorldiag}$}
\ \comp\ 
\lower35pt\hbox{$\includegraphics[width=1.3cm]{idtensortwtensorid}$}
\ \comp\ 
\lower35pt\hbox{$\includegraphics[width=1.3cm]{lcodiagtensorlcodiag}$}
\  = \ 
\lower18pt\hbox{$\includegraphics[width=1.4cm]{full}$} 
\end{equation}
we have
\begin{equation}\label{eq:onewire}\tag{$\lcodiagsym\ldiagsym^c$}
\lower18pt\hbox{$\includegraphics[width=1.4cm]{lcodiag}$} 
\comp 
\lower18pt\hbox{$\includegraphics[width=1.4cm]{ldiag}$}
\ = \ 
\lower18pt\hbox{$\includegraphics[width=1.4cm]{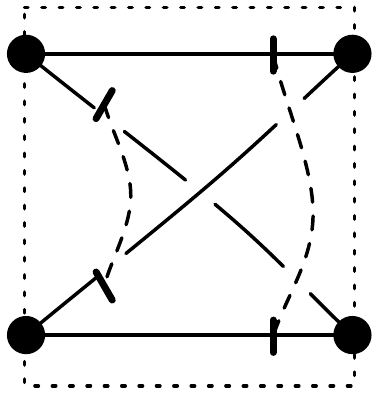}$}
\end{equation}
and the right-hand sides are not equal as arrows of $\Sp{\Relfc}$.


In \eqref{eq:diagandzero}, \eqref{eq:ldiagandend}, \eqref{eq:diagandlcodiag}
and \eqref{eq:diagandldiag} below we show how \eqref{eq:csp} and \eqref{eq:sp} interact together in $\Sp{\Relfc}$. We comment on two of the more interesting equations that the interactions suggest: the right hand side of \eqref{eq:diagandlcodiag} implies 
$\lcodiagsym\comp\diagsym=(\diagsym\otimes\diagsym)\comp(\idsym\otimes\twsym\otimes\idsym)\comp(\lcodiagsym\otimes\lcodiagsym)$,  an ``asymmetric'' commutative/cocommutative bialgebra structure. The left hand side of \eqref{eq:diagandldiag} implies $\diagsym \comp (\ldiagsym \otimes \idsym) = \ldiagsym \comp (\diagsym\otimes\diagsym)\comp (\idsym\otimes \twsym\otimes\idsym) \comp (\idsym\otimes\idsym\otimes \lcodiagsym)$.
 
\begin{equation}\label{eq:diagandzero}\tag{$\diagsym\!\!\rzerosym\!\!\lzerosym\!\!{}^c$}
\lower18pt\hbox{$\includegraphics[width=1.2cm]{diag}$} \ \comp\ 
\lower18pt\hbox{$\includegraphics[width=1.2cm]{idtensorrzero}$} \ =\   \lower9pt\hbox{$\includegraphics[width=1.2cm]{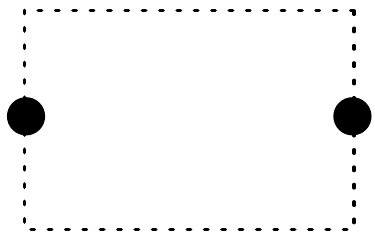}$}
\quad
\lower18pt\hbox{$\includegraphics[width=1.2cm]{codiag}$} \ \comp\  \lower9pt\hbox{$\includegraphics[width=1.2cm]{rzero}$} \ =\  \lower18pt\hbox{$\includegraphics[width=1.2cm]{rzerotensorrzero}$}
\end{equation}
\begin{equation}\label{eq:ldiagandend}\tag{$\ldiagsym\leftEndsym\rightEndsym^c$}
\lower18pt\hbox{$\includegraphics[width=1.2cm]{ldiag}$} \ \comp\  \lower18pt\hbox{$\includegraphics[width=1.2cm]{idtensorrightend}$} \ =\  \lower9pt\hbox{$\includegraphics[width=1.2cm]{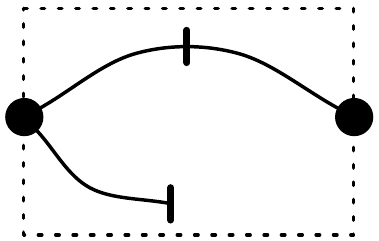}$} 
\quad
\lower18pt\hbox{$\includegraphics[width=1.2cm]{lcodiag}$} \ \comp\  \lower9pt\hbox{$\includegraphics[width=1.2cm]{rightend}$} \ =\  \lower18pt\hbox{$\includegraphics[width=1.2cm]{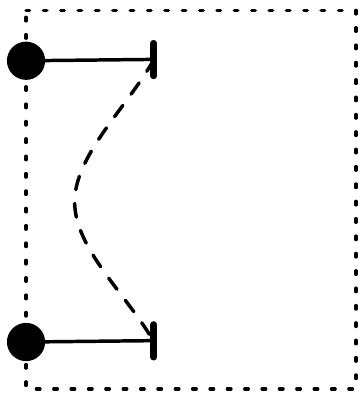}$}
\end{equation}
\begin{equation}\label{eq:diagandlcodiag} \tag{$\diagsym\lcodiagsym^c$}
\lower18pt\hbox{$\includegraphics[width=1.2cm]{diag}$} \ \comp\ 
\lower18pt\hbox{$\includegraphics[width=1.2cm]{lcodiag}$} \ =\  \lower9pt\hbox{$\includegraphics[width=1.2cm]{rzerocomplzero}$}
\quad
\lower18pt\hbox{$\includegraphics[width=1.2cm]{lcodiag}$} \ \comp\ 
\lower18pt\hbox{$\includegraphics[width=1.2cm]{diag}$} \ =\ 
\lower18pt\hbox{$\includegraphics[width=1.2cm]{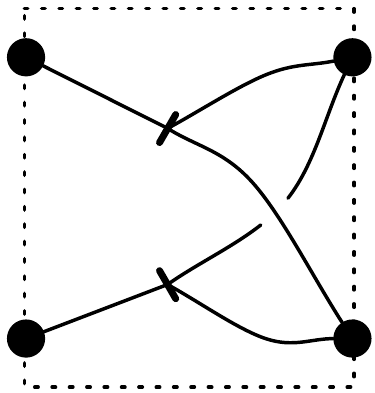}$}
\end{equation}
\begin{equation}\label{eq:diagandldiag} \tag{$\diagsym\ldiagsym^c$}
\lower18pt\hbox{$\includegraphics[width=1.2cm]{diag}$} \ \comp\ 
\lower22pt\hbox{$\includegraphics[width=1.3cm]{ldiagtensorid}$} \ =\  \lower22pt\hbox{$\includegraphics[width=1.3cm]{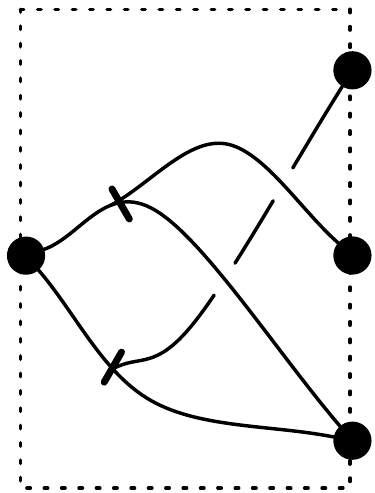}$}
\quad
\lower18pt\hbox{$\includegraphics[width=1.2cm]{ldiag}$} \ \comp\ 
\lower22pt\hbox{$\includegraphics[width=1.3cm]{diagtensorid}$}\ =\ 
\lower22pt\hbox{$\includegraphics[width=1.3cm]{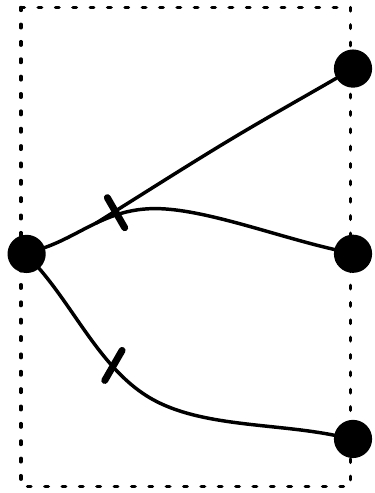}$}
\end{equation}

All linking diagrams in $\Sp{\Relfc}$ can be obtained from the basic set of components \eqref{eq:csp}, \eqref{eq:sp} and \eqref{eq:common}, combined using the operations of composition and tensor.
\begin{theorem}\label{thm:syntaxforcontention}
Every arrow in $\Sp{\Relfc}$ decomposes into an expression consisting only of $\diagsym$, $\rightEndsym$, $\codiagsym$, $\leftEndsym$, $\ldiagsym$, $\rzerosym$, $\lcodiagsym$, $\lzerosym$, $\idsym$, $\twsym$, composed with $\comp$ and $\otimes$.
\end{theorem}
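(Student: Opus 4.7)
The plan is to decompose any arrow $s = k \xrelleftarrow{f} (X,\contention{X}) \xrelrightarrow{g} l$, with $n = |X|$, as a three-stage composite $s = L \comp M \comp R$, where $L : k \to n$ encodes the left leg $f$, $R : n \to l$ symmetrically encodes $g$, and $M : n \to n$ supplies any contention pairs of $\contention{X}$ that are not already forced by shared ports on either boundary. Each factor is then built explicitly from the ten generators using $\comp$ and $\otimes$.

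For $L$, write $a_i = |f(x_i)|$ and $c_j = |\{i : j \in f(x_i)\}|$. Inductively define iterated comultiplications $\ldiagsym^{(c)} : 1 \to c$ from $\ldiagsym,\idsym,\rzerosym$ (base cases $\ldiagsym^{(0)} = \rzerosym$, $\ldiagsym^{(1)} = \idsym$) and iterated multiplications $\codiagsym^{(a)} : a \to 1$ from $\codiagsym,\idsym,\leftEndsym$, and set
\[
L \Defeq \Big(\bigotimes_{j=0}^{k-1} \ldiagsym^{(c_j)}\Big) \comp \pi \comp \Big(\bigotimes_{i=1}^{n} \codiagsym^{(a_i)}\Big),
\]
where $\pi$ is the permutation of the $A = \sum_i a_i = \sum_j c_j$ intermediate ports --- built from $\twsym$ and $\idsym$, using that adjacent transpositions generate the symmetric group --- pairing the $(j,p)$-th output of the first tensor with the $(i,q)$-th input of the last whenever this matches an incidence $j \in f(x_i)$. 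Tracing minimal synchronisations via Lemma~\ref{lem:pb} through both compositions collapses the pullback carrier to $n$ singletons indexed by the links, yielding $L = k \xrelleftarrow{f} (X, \contention{L}) \xrelrightarrow{[\id]} n$, where $\contention{L}$ is the contention forced by shared left ports. The mirror-image construction using $\diagsym^{(b)}$ and $\lcodiagsym^{(t)}$ (with $b_i = |g(x_i)|$, $t_j = |\{i : j \in g(x_i)\}|$) gives $R = n \xrelleftarrow{[\id]} (X, \contention{R}) \xrelrightarrow{g} l$.

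To assemble $M$, I first verify by a direct Lemma~\ref{lem:pb} calculation that the ``contention gadget''
\[
\gamma \Defeq (\diagsym \otimes \diagsym) \comp (\idsym \otimes \twsym \otimes \idsym) \comp (\idsym \otimes \idsym \otimes \lcodiagsym) \comp (\idsym \otimes \idsym \otimes \rightEndsym) \,:\, 2 \to 2
\]
realises the ``identity with full contention'' span $2 \xrelleftarrow{[\id]} (2, 2 \times 2) \xrelrightarrow{[\id]} 2$ on the left of~\eqref{eq:idandtwistcontention}. For each pair $(x_i, x_j) \in \contention{X} \setminus (\contention{L} \cup \contention{R})$, conjugating $\idsym^{\otimes (n-2)} \otimes \gamma$ by a suitable permutation from $\twsym$ and $\idsym$ produces a span $M_{ij} : n \to n$ with identity legs whose only non-reflexive contention is $\{(i,j),(j,i)\}$. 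Let $M$ be the composite of all such $M_{ij}$.

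Finally, applying Lemma~\ref{lem:pb} twice to $L \comp M \comp R$: since all three inner legs are identity-like, each intermediate pullback collapses to singletons $(\{x_i\}, \{x_i\})$; the carrier accumulates contention $\contention{L} \cup \contention{M} \cup \contention{R}$, which equals $\contention{X}$ by the morphism condition on $f,g$ and the choice of $\contention{M}$; and the outer legs reproduce $f$ and $g$. Hence $L \comp M \comp R = s$. The main obstacle is the bookkeeping for $L$ and $R$: one must exhibit the permutation $\pi$ explicitly and verify that the two successive pullbacks collapse precisely as claimed, with no spurious extra carrier elements or contentions. The one-off verification of $\gamma$, and the inductive well-definedness (up to associativity) of $\ldiagsym^{(c)}, \codiagsym^{(a)}, \diagsym^{(b)}, \lcodiagsym^{(t)}$, are comparatively routine.
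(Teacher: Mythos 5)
The paper gives no proof of this theorem (it is stated as ``Omitted''), so there is nothing to compare against; judged on its own terms, your proposal is sound and supplies the missing argument. The factorisation $L\comp M\comp R$ is the natural normal form: $L$ and $R$ are essentially images of cospans of functions under the embedding $E:\Csp{\Setf}\to\Sp{\Relfc}$ (so their correctness largely reduces to the functoriality already proved in the paper), and the only genuinely new ingredient is the middle stage installing the contention pairs not forced by shared boundary ports. I checked your gadget $\gamma$ by tracing minimal synchronisations: the two $\diagsym$-links each cover both of their copies, the copies fed into $\lcodiagsym$ land on links that are in contention inside $\lcodiagsym$ (cf.\ the right-hand equation of~\eqref{eq:ldiagandend}), and this contention propagates to the two resulting composite links, yielding exactly the span $2\xrelleftarrow{[\id]}(2,2\times 2)\xrelrightarrow{[\id]}2$ of~\eqref{eq:idandtwistcontention}. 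Note that the simpler candidate $\ldiagsym\comp\lcodiagsym$ or $\lcodiagsym\comp\ldiagsym$ would \emph{not} work here (by~\eqref{eq:diagcodiag} and~\eqref{eq:onewire} they give the two-wire identity and the four-link complete graph respectively), so the extra $\diagsym$/$\rightEndsym$ plumbing in $\gamma$ is genuinely needed and you were right to build it explicitly. The remaining points you flag as bookkeeping really are bookkeeping: the intermediate pullbacks of identity-legged spans collapse to singletons and accumulate the union of the contentions, and $\contention{L}\cup\contention{M}\cup\contention{R}=\contention{X}$ follows from the morphism condition~\eqref{eq:morphism} on $f$ and $g$ together with your choice of $\contention{M}$. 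The only degenerate cases worth a sentence in a full write-up are $X=\varnothing$ and $k=0$ or $l=0$, where some of the tensor products are empty and one should observe that $\id_0$ is expressible (e.g.\ as $\lzerosym\comp\rzerosym$).
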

\begin{proof}
Omitted.
\end{proof}


\section{Multisets and multirelations}
\label{sec:multisets}


We have seen that $\Sp{\Relfc}$ is a setting in which one can study the algebra of \eqref{eq:csp}, \eqref{eq:sp} and \eqref{eq:common}. Here we develop a second, different setting, that arises from a compositional algebra of P/T nets~\cite{Bruni2013}.

Given a set $X$, let $\multiset{X}$ denote the set of finite maps $\U: X\to \N$, ie where $dom(\U)$ is a finite set.
We call elements of $\multiset{X}$ \emph{multisets}. We will sometimes abuse set notation to when talking about multisets; any ordinary set $U\subseteq X$ can  be considered as a multiset in the obvious way: \[
U x=\begin{cases}1 & \text{if }x\in U \\ 0 & \text{otherwise.}\end{cases}
\]
Given $\U, \V\in \multiset{X}$, $\U + \V$ is the multiset $(\U + 
V)(x)=\U x + \V x$. We say $\U \geq \V$ if $\forall x.\; \U x - \V x\in \N$. If $\U \geq \V$, let $(\U - \V)\in\multiset{X}$ be defined $(\U - \V)x \Defeq \U x - \V x$. 
Given $k\in \N$ and $\U\in \multiset{X}$, $k\U(x)\Defeq k\cdot\U(x)$.

$\multiset{X}$ is the action on objects of the functor
$\multiset{-}:\Set \to \Set$. On functions,  
$\multiset{f}:\multiset{X}\to\multiset{Y}$ is defined
$\multiset{f}\U (y)=\sum_{x\in X : f(x)=y}\U x$; 
note that since $\U$ is nonzero on a finite subset of $X$, this is well-defined. 
There is a natural transformation 
$\mu_X: \multiset{\multiset{X}}\to \multiset{X}$ that
takes $\mu_X \mathcal{V} (x) = \sum_{\mathcal{V}(\mathcal{U})\geq 0} \mathcal{V}\mathcal{U}\cdot \mathcal{U}x$
and $\eta_X: X\to \multiset{X}$ where $\eta_X x (y) = \begin{cases} 1 & \text{if }x=y \\ 0 & \text{otherwise.} \end{cases}$ It is not difficult to check that $(\multiset{-},\mu,\eta)$ is a monad, commonly referred to as the multiset monad. Given $f:A\to \multiset{B}$, the definition of $f^\#:\multiset{A} \to \multiset{B}$ follows from a simple calculation: $\lift{f}(\mathcal{U}) = \sum_{a\in A} (\mathcal{U}a) f(a)$.

Let $\multirel\Defeq Kl(\multiset{-})$ and $\Relfm$ be the full subcategory of $\multirel$ with objects the finite sets. The arrows of $\Relfm$ are thus functions $f:X\to \multiset{Y}$ in $\Setf$, we will sometimes write $f:X \mrelto Y$.

\subsection{Multi synchronisations}
Suppose that $f:A\mrelto X$ and $g: B\mrelto X$ in $\Relfm$. A (multi $f$, $g$) \emph{synchronisation} is a pair $(\U,\V)$ with $\U\in\multiset{A}$ and  $\V\in\multiset{B}$ such that $\lift{f}\U = \lift{g}\V$. A synchronisation thus consists of a multiset of $A$ together with a multiset of $B$ that both map to the same multiset of $X$ via $\lift{f}$ and $\lift{g}$, respectively; this notion is the multiset equivalent of the notion of synchronisation that we have considered in \S\ref{subsec:synchronisations}. We will again write $\synch{\U}{\V}$ as shorthand and write $\msynch{f}{g}$ for the set of synchronisations.
 
Synchronisations inherit an ordering from multisets, pointwise.
If we have $\synch{\U'}{\V'} \leq \synch{\U}{\V}$ then $\synch{\U-\U'}{\V-\V'}$: indeed $\lift{f}(\U-\U') = \lift{f}\U - \lift{f}\U'
=\lift{g}\V-\lift{g}\V' = \lift{g}(\V-\V')$. Synchronisations are closed under linear combinations: if $\{\synch{\U_i}{\V_i}\}_{i\in I}$ and $k_i\in \N$ then define $\sum_i k_i\synch{\U_i}{\V_i}\Defeq (\sum_i k_i\U_i , \sum_i k_i\V_i)$, which is clearly a synchronisation.

A set $\mathbf{X}$ of synchronisations is \emph{mutually incomparable} when
\begin{multline*}
\forall \synch{\U}{\V},\synch{\U'}{\V'}\in \mathbf{X}.\; \synch{\U}{\V} \leq \synch{\U'}{\V'} \ \vee\  \synch{\U'}{\V'} \leq \synch{\U}{\V} \\ \ \Rightarrow\  \synch{\U}{\V} = \synch{\U'}{\V'}.
\end{multline*}

We need to recall a version of Dickson's lemma~\cite{Dickson1913}, stated in terms of synchronisations. It can be proved by a straightforward induction.
\begin{lem}[Dickson]\label{lem:dickson}
Suppose $f:A\mrelto X$ and $g: B\mrelto X$ in $\Relfm$.
Any set $X$ of mutually incomparable multi-$f$, $g$ synchronisations is finite. 
\end{lem}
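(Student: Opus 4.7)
The plan is to reduce the statement to the classical form of Dickson's lemma for $\N^n$ and then carry out the standard induction on dimension. Since $A$ and $B$ are finite, a multiset $\U\in\multiset{A}$ is simply a function $A\to\N$ and is therefore isomorphic to a tuple in $\N^{|A|}$; likewise for $\multiset{B}$. A synchronisation $\synch{\U}{\V}$ thus corresponds to a point of $\N^{|A|+|B|}$, and the pointwise order on synchronisations matches the componentwise order on tuples. A mutually incomparable set of synchronisations is therefore exactly an antichain in $(\N^{n},\leq)$ for $n\Defeq|A|+|B|$, so the lemma follows from the classical assertion that every antichain in $\N^n$ is finite.

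The induction is on $n$. For $n=0$ the poset $\N^0$ is a singleton and there is nothing to prove. For the inductive step, assuming the result for $n$, suppose for contradiction that there is an infinite antichain in $\N^{n+1}$, and enumerate its elements as a sequence $(x_k)_{k\in\N}$. I would then invoke the standard fact that every infinite sequence in $\N$ admits an infinite non-decreasing subsequence, applied to the first coordinates of the $x_k$, to extract an infinite subsequence $(x_{k_i})_{i\in\N}$ whose first coordinates are non-decreasing. The projection of this subsequence onto the remaining $n$ coordinates is an infinite sequence in $\N^n$, and by the inductive hypothesis it cannot be an antichain: there exist $i<j$ with the last $n$ coordinates of $x_{k_i}$ componentwise below those of $x_{k_j}$. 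Combined with the monotonicity of the first coordinate along the subsequence, this yields $x_{k_i}\leq x_{k_j}$, contradicting the antichain property.

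Translating back, the contradictory pair furnishes two synchronisations $\synch{\U_i}{\V_i}\leq\synch{\U_j}{\V_j}$ in the original mutually incomparable set $\mathbf{X}$, which is the desired contradiction. No part of the argument is deep: the only small point requiring attention is the extraction of a non-decreasing subsequence from an arbitrary sequence of naturals, and this is itself a short induction (or a direct use of the fact that $\N$ is well-ordered). Hence the proof is essentially a bookkeeping exercise once the reduction to $\N^n$ has been made explicit, which is why the paper is content to describe the argument as ``a straightforward induction.''
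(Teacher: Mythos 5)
Your reduction of the statement to the finiteness of antichains in $\N^{n}$ with the componentwise order, $n=|A|+|B|$, is exactly right: since $A$ and $B$ are finite, synchronisations are particular points of $\N^{|A|}\times\N^{|B|}$, the order on synchronisations is the product order, and so any mutually incomparable set of synchronisations is an antichain there. The paper offers nothing beyond the phrase ``straightforward induction,'' and an induction on $n$ is certainly the intended route.

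However, the inductive step as you state it has a genuine gap. Your inductive hypothesis is only ``every antichain in $\N^{n}$ is finite,'' and from this you cannot conclude, for the projected sequence $(y_i)$ of last-$n$-coordinate tuples, that ``there exist $i<j$ with $y_i\leq y_j$.'' Two things go wrong. First, the projected tuples need not be distinct; as a set they may even be finite and hence a perfectly good (finite) antichain, so ``it cannot be an antichain'' is not justified (this case is repaired by pigeonhole, since two equal projections give a comparable pair, but it must be said). Second, and more seriously, even when the projected set is infinite and therefore not an antichain, that only yields two comparable distinct tuples $y_i\neq y_j$ with no control over whether the smaller one carries the smaller index: if $y_j\leq y_i$ with $i<j$, the non-decreasing first coordinates point the wrong way and nothing follows. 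Deducing ``in every infinite sequence there are $i<j$ with $y_i\leq y_j$'' from ``no infinite antichains'' is exactly the nontrivial, Ramsey-flavoured implication from antichain-finiteness plus well-foundedness to well-quasi-ordering, which you have not established. The standard repair is to strengthen the induction hypothesis: prove by induction on $n$ that every infinite sequence in $\N^{n}$ contains an infinite non-decreasing subsequence (the case $n=1$ is the fact about $\N$ you already invoke; the step is your first-coordinate extraction followed by an application of the hypothesis to the projections of the extracted subsequence). Finiteness of antichains is then an immediate corollary, and the rest of your argument goes through unchanged.
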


Let $\minmsynch{f}{g}$ be the set of minimal synchronisations. Clearly any two minimal synchronisations are incomparable, thus, by the conclusion of Lemma~\ref{lem:dickson}, $\minmsynch{f}{g}$ is finite. In particular
\eqref{eq:weakpullbackdiagram} is a commutative diagram in $\Relfm$
where $p\synch{\U}{\V}=\U$ and $q\synch{\U}{\V}=\V$.
\begin{equation}\label{eq:weakpullbackdiagram}
\raise30pt\hbox{$
\xymatrix@=15pt{
& {\minmsynch{f}{g}} \ar[dl]_-p \ar[dr]^-q \\
{A} \ar[dr]_f & & {B} \ar[dl]^g \\
& {X}
}$}
\end{equation}

\subsection{Weak pullbacks in $\Relfm$}

The following result shows that any synchronisation can be written as a linear combination of minimal synchronisations. 
\begin{lem}\label{lem:decomposition}
If $\synch{\U}{\V}$ then there exists a
family $\{(k_i,\synch{\U_i}{\V_i})\}_{i\in I}$, where each $\synch{\U_i}{\V_i}$ is minimal and different from $\synch{\U_j}{\V_j}$ for all $j\neq i$, s.t.\  $\synch{\U}{\V}=\sum_{i} k_i \synch{\U_i}{\V_i}$.  The family is called a \emph{minimal decomposition} of $\synch{\U}{\V}$.
\end{lem}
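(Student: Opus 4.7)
The plan is to prove the lemma by strong induction on the natural number $n \Defeq \sum_{a\in A}\U(a) + \sum_{b\in B}\V(b)$, which is finite since $\U$ and $\V$ have finite support. The base case $n=0$ is trivial: then $\U$ and $\V$ are both zero, and the empty family $I=\varnothing$ provides the required decomposition.

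For the inductive step, if $\synch{\U}{\V}$ is already minimal, the singleton family $\{(1,\synch{\U}{\V})\}$ works. Otherwise, we want to extract a minimal synchronisation lying below $\synch{\U}{\V}$ and recurse on the difference. Consider the set
\[
S \;\Defeq\; \{\,\synch{\U'}{\V'}\;\mid\;\synch{\U'}{\V'}\leq\synch{\U}{\V},\ \synch{\U'}{\V'}\text{ non-trivial}\,\}.
\]
Because $\U'\leq \U$ and $\V'\leq \V$ componentwise, and $\U,\V$ have finite support with integer values, $S$ is a finite non-empty poset (it contains $\synch{\U}{\V}$). Pick any $\leq$-minimal element $\synch{\U^*}{\V^*}$ of $S$. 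This synchronisation is minimal in the absolute sense required by the lemma: any non-trivial synchronisation below it is also below $\synch{\U}{\V}$, hence lies in $S$, and by minimality in $S$ must equal $\synch{\U^*}{\V^*}$.

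Now form the difference $\synch{\U-\U^*}{\V-\V^*}$; this is a synchronisation by the observation preceding the lemma (that $\lift{f}(\U-\U^*) = \lift{g}(\V-\V^*)$), and its total size is strictly less than $n$ because $\synch{\U^*}{\V^*}$ is non-trivial. By the inductive hypothesis, we obtain a minimal decomposition $\sum_{j\in J} k_j\synch{\U_j}{\V_j}$ of $\synch{\U-\U^*}{\V-\V^*}$ with pairwise distinct minimal summands. If $\synch{\U^*}{\V^*}$ coincides with some $\synch{\U_{j_0}}{\V_{j_0}}$, increment $k_{j_0}$ by one; otherwise adjoin the new pair $(1,\synch{\U^*}{\V^*})$ to the family. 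In either case the sum equals $\synch{\U}{\V}$, all summands are minimal, and they remain pairwise distinct.

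The main obstacle is the verification that the chosen $\synch{\U^*}{\V^*}$ is genuinely minimal (not merely minimal within $S$) and that $\synch{\U-\U^*}{\V-\V^*}$ is again a valid synchronisation; once these are in hand, the well-foundedness supplied by the integer-valued size $n$ makes the induction routine. Dickson's lemma is not needed for this decomposition result itself, although it is the reason such minimal decompositions are finite objects in the first place.
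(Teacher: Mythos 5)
Your proof is correct and is essentially the argument the paper intends: the paper's proof is just the remark ``Simple induction,'' and your induction on the total size $\sum_{a}\U(a)+\sum_{b}\V(b)$, extracting a $\leq$-minimal non-trivial synchronisation below $\synch{\U}{\V}$ and recursing on the difference, is the natural way to carry that out. The two points you flag as needing care (absolute minimality of the chosen $\synch{\U^*}{\V^*}$ and the difference being a synchronisation) are handled correctly, the latter by the observation the paper states just before the lemma.
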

\begin{proof}
Simple induction.\qed
\end{proof}

The conclusion of Lemma~\ref{lem:decomposition} implies that~\eqref{eq:weakpullbackdiagram} is a weak pullback diagram: given $\alpha: Y \mrelto A$ and $\beta:Y\mrelto B$ such that $f\alpha=g\beta$ in $\Relfm$, $h: Y\mrelto \minmsynch{f}{g}$ takes $y$ to a minimal decomposition of $\synch{\alpha y}{\beta y}$.

\begin{rmk}
The diagram~\eqref{eq:weakpullbackdiagram} is merely a weak pullback, because the decomposition of Lemma~\ref{lem:decomposition} is not, in general, unique. Indeed, consider $t:2\to 1$ in $\Relfm$ with $t0=t1=\{0\}$. Now $\minmsynch{t}{t}=\{\synch{\{0\}}{\{0\}}, \synch{\{0\}}{\{1\}}, \synch{\{1\}}{\{0\}}, \synch{\{1\}}{\{1\})}\}$. Consider $u: 1\to 2$ in $\Relfm$ with $u0=\{0,1\}$. Then $\synch{u0}{u0}$ but there are several minimal decompositions: eg
$\synch{\{0\}}{\{0\}}+\synch{\{1\}}{\{1\}}$ and
$\synch{\{0\}}{\{1\}}+\synch{\{1\}}{\{0\}}$.
\end{rmk}

\section{Linking diagrams in $\Spr{\Relfm}$}
\label{sec:multilinkingdiags}


Consider $\Spr{\Relfm}$, with objects that the natural numbers and arrows spans $k \xleftarrow{f} x \xrightarrow{g} l$ in $\Relfm$ where $x\to k\times l$ is injective\footnote{In other words, the internal binary relations in $\Relfm$: 
an internal relation is a span $k\leftarrow x \rightarrow l$ where $x \to k\t l$ is mono.}.
Composition proceeds in two steps. First, given 
\[
k_0 \xleftarrow{f_0} x_0 \xrightarrow{g_0} k_1 \xleftarrow{f_1} x_1 \xrightarrow{g_1} k_2,\]
construct
$k_0 \xleftarrow{f_0 p} \minmsynch{g_0}{f_1} \xrightarrow{g_1 q} k_2$ where $p:\minmsynch{g_0}{f_1} \to x_0$ and $q:\minmsynch{g_0}{f_1} \to x_1$ are the projections.
 In general, however, $[f_0 p, g_1 q]:\minmsynch{g_0}{f_1} \to k_0\times k_2$ may be non-injective, thus we obtain $\minmsynch{g_0}{f_1}'$ below, together with $f',g'$ in $\Relfm$ through an epi-mono factorisation of $[\lift{f_0}p,\lift{g_1}q]$ in $\Set$, and this is the composition.
\[
k_0 \xleftarrow{f'} \minmsynch{g_0}{f_1}' \xrightarrow{g'} k_2
\]
\begin{proposition}
$\Spr{\Relfm}$ is a category.
\end{proposition}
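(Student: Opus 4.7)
The plan is to verify the three category axioms: that composition is well-defined on isomorphism classes, that identities exist and act as identities, and that composition is associative. Well-definedness follows by construction: the epi-mono factorisation in $\Set$ produces a jointly mono pair $\langle f',g'\rangle$, so the output is indeed an arrow of $\Spr{\Relfm}$. Independence from the choice of representatives reduces to two facts: an isomorphism of input spans induces a bijection on minimal synchronisations (since the defining condition $\lift{g_0}\U=\lift{f_1}\V$ is preserved), and epi-mono factorisations in $\Set$ are unique up to unique iso.

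For the identity on $k$, I would take $\idsym_k \Defeq (k \xleftarrow{\eta_k} k \xrightarrow{\eta_k} k)$, which is a valid arrow of $\Spr{\Relfm}$ since $\langle \eta_k, \eta_k\rangle$ is injective. Composing with an arbitrary $(k \xleftarrow{f} x \xrightarrow{g} l)$: because $\lift{\eta_k}$ is the identity on $\multiset{k}$, a synchronisation $(\U,\V)$ amounts to $\U=\lift{f}\V$, and the minimal ones are precisely the pairs $(f(z),\{z\})$ indexed by $z\in x$; hence $\minmsynch{\eta_k}{f}\cong x$. The induced pairing $\langle \lift{\eta_k}p, \lift{g}q\rangle$ sends $z\mapsto (f(z),g(z))$, which is already injective by the $\Spr{\Relfm}$ hypothesis on the original span, so no quotienting occurs and the composite is isomorphic to the original. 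The right-identity case is symmetric.

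Associativity is the main obstacle. Given three composable spans with carriers $x_0,x_1,x_2$ and boundary objects $k_0,k_1,k_2,k_3$, the strategy is to identify both bracketings with the same image in $\multiset{k_0}\times\multiset{k_3}$ of the set of \emph{three-way synchronisations} --- triples $(\U_0,\U_1,\U_2)$ with $\lift{g_0}\U_0=\lift{f_1}\U_1$ and $\lift{g_1}\U_1=\lift{f_2}\U_2$ --- under the map $(\U_0,\U_1,\U_2)\mapsto (\lift{f_0}\U_0,\lift{g_2}\U_2)$. By Lemma~\ref{lem:decomposition} every such triple is a linear combination of minimal ones, so both iterated weak pullbacks enumerate the same reachable pairs. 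The delicate point is showing that the intermediate epi-mono factorisation in the first composition does not discard information needed by the second weak pullback: this reduces to the observation that the factorisation merely quotients by equality of pairs in $\multiset{k_0}\times\multiset{k_2}$, while any subsequent synchronisation with $f_2$ depends only on the $\multiset{k_2}$-component, which survives the quotient. Once both composites are realised as the image of the same $\Set$-function, uniqueness of epi-mono factorisations yields a canonical isomorphism of spans, i.e.\ equality in $\Spr{\Relfm}$.
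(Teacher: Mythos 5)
Your handling of identities and of well-definedness on isomorphism classes is correct and usefully supplements the paper, which omits both; in particular the observation that $\minmsynch{\eta_k}{f}\cong x$ with pairing $z\mapsto(f(z),g(z))$ already injective is exactly right. The gap is in the associativity argument. A composite relational span is determined by the \emph{image} in $\multiset{k_0}\times\multiset{k_3}$ of the set of \emph{minimal} synchronisations, not by the set of ``reachable pairs'' obtainable as $\N$-linear combinations of such images. Your reduction via Lemma~\ref{lem:decomposition} shows only that the two bracketings generate the same additive closure; it does not show that the images of the two carriers themselves coincide, because minimality does not transfer between iterated synchronisations and three-way synchronisations. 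Concretely, take $k_0=k_3=1$, $k_1=\{a_1,a_2\}$, $k_2=\{b\}$, with $x_0=\{p\}$, $g_0(p)=a_1+a_2$; $x_1=\{q_1,q_2\}$, $f_1(q_i)=a_i$, $g_1(q_i)=b$; $x_2=\{s\}$, $f_2(s)=2b$. Then $\minmsynch{g_1}{f_2}$ contains $\alpha=\synch{2q_1}{s}$ and $\gamma=\synch{2q_2}{s}$, and $\synch{2p}{\alpha+\gamma}$ is a \emph{minimal} synchronisation of $g_0$ with $f_1p_1$ (no proper nonzero submultiset of $\alpha+\gamma$ synchronises over $k_1$), yet its underlying three-way synchronisation $(2p,\,2q_1+2q_2,\,2s)$ is twice the minimal one $(p,\,q_1+q_2,\,s)$, so its image point is twice the image point produced by the other bracketing. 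The identification of minimal iterated synchronisations with minimal three-way synchronisations, which your argument tacitly needs, therefore fails, and the equality of the two images is exactly what remains to be proved.

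The paper takes a different route: it uses the weak-pullback property to build comparison morphisms $h_0$ and $\Phi$ in $\Relfm$ between the two iterated carriers and compares the leg-pairings through them, rather than passing through three-way synchronisations. Be aware, however, that $\Phi\sigma$ is in general a \emph{multiset} of minimal synchronisations, so the paper's displayed equations likewise only exhibit each image point of one bracketing as an $\N$-linear combination of image points of the other; your proposal and the paper's sketch stall at essentially the same step. A complete argument must either show that minimality is preserved across re-bracketing (ruling out configurations such as the one above) or recast composition so that only the additive closure matters; neither follows from Lemma~\ref{lem:decomposition} alone. Your remark about the intermediate epi-mono factorisation is correct but is the easier of the two issues: that quotient identifies only elements with equal images over $k_1$ and $k_3$, and minimal synchronisations do correspond across it.
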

\begin{proof} (Sketch) 
The non-trivial part is showing that composition is associative. The essence is captured in the diagram below, in $\Relfm$.
\[
\xymatrix@=15pt{
& & {\minmsynch{g_0}{f_1p_1}} \ar@{.>}[rr]^{\Phi}
\ar@/_2pc/[ddl]_-{r_0} \ar[drr]^(.4){s_0} \ar@{.>}[d]^{h_0} & & 
    {\minmsynch{g_1q_0}{f_2}} \ar@{.>}[d]_{h_1}
    \ar[dll]|{\hole}_(.4){r_1} \ar@/^2pc/[ddr]^-{s_1}
 \\
& & {\minmsynch{g_0}{f_1}} \ar@{}[dd]|{(\dagger)}
\ar[dl]_-{p_0} \ar[dr]^-{q_0} & & 
{\minmsynch{g_1}{f_2}} \ar@{}[dd]|{(\ddagger)}
\ar[dl]_-{p_1} \ar[dr]^-{q_1} \\
& {x_0} \ar[dl]_{f_0}
\ar[dr]^{g_0} & & {x_1} \ar[dl]_{f_1} \ar[dr]^{g_1} & & {x_2} \ar[dl]_{f_2} \ar[dr]^{g_2} \\
{k_0} & & {k_1} & & {k_2} & & {k_3}
}
\]
In addition to the two weak pullback diagrams $(\dagger)$ and $(\ddagger)$,
we have a set $\minmsynch{g_0}{f_1 p_1}$ and the projection maps in $\Relfm$ 
\[ 
r_0:\minmsynch{g_0}{f_1 p_1} \to x_0,\ s_0: \minmsynch{g_0}{f_1 p_1} \to \minmsynch{g_1}{f_2}\]
and a set $\minmsynch{g_1 q_0}{f_2}$ together with maps
\[
r_1: \minmsynch{g_1 q_0}{f_2} \to \minmsynch{g_0}{f_1},\ s_1: \minmsynch{g_1 q_0}{f_2} \to x_2
\]
The sets $\minmsynch{g_0}{f_1 p_1}$, $\minmsynch{g_1 q_0}{f_2}$ are not, in general isomorphic, 
for similar reasons why the $\minmsynch{f}{g}$ construction fails to be a pullback; there is, in general, more than one decomposition of a synchronisation into a linear combination of minimal synchronisations.

This is not a problem, because all that we require is that 
$(f_0r_0,g_2q_1s_0)$ and $(f_0p_0r_1,g_2s_1)$ have the same
image in $\multiset{k_0}\times\multiset{k_2}$. 

To show this, first we use the weak pullback property of 
$(\dagger)$ to obtain 
$h_0:\minsynch{g_0}{f_1p_1}\to\minsynch{g_0}{f_1}$,
satisfying $p_0h_0=r_0$ and $q_0h_0=p_1s_0$.
The second of these equations, together
with the fact that $\minsynch{g_1q_0}{f_2}$
is a weak pullback allows us to obtain
\[
\Phi:\minsynch{g_0}{f_1p_1}\to
\minsynch{g_1q_0}{f_2}
\]
 that
satisfies $r_1\Phi = h_0$ and
$s_1\Phi = q_1s_0$. 
Now, for any $\sigma\in\minsynch{g_0}{f_1p_1}$ we have 
$f_0r_0\sigma = f_0p_0h_0\sigma = f_0p_0r_1\Phi\sigma$
and 
$g_2q_1s_0\sigma=g_2s_1\Phi\sigma$,
so the image of $(f_0r_0,g_2q_1s_0)$
is contained in the image of $(f_0p_0r_1,g_2s_1)$.
A symmetric argument, constructing morphisms
$h_1: \minsynch{g_1q_0}{f_2}\to
\minsynch{g_1}{f_2}$  and $\Psi: \minsynch{g_1q_0}{f_2}\to \minsynch{g_0}{f_1p_1}$ allows us to demonstrate the reverse inclusion.

\qed
\end{proof}
Note that, as indicated in the proof above, the ``relational'' requirement on spans is necessary in order to ensure associativity of composition. Again there is a tensor product inherited from the coproduct in $\Setf$.


\subsection{The algebra of $\Spr{\Relfm}$}

While we no longer have to draw contention, in $\Spr{\Relfm}$ links can have multiple connections to boundary ports. 
We indicate this by
\begin{wrapfigure}{r}{0.15\textwidth}
\includegraphics[width=1.5cm]{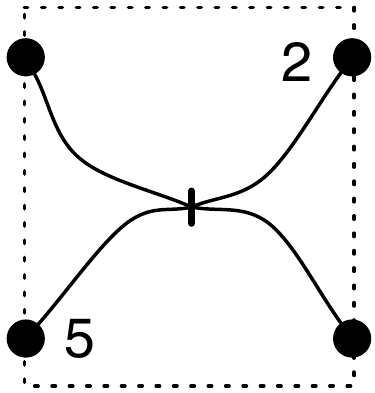}
\end{wrapfigure}
annotating connections with natural numbers $\geq 2$: for instance the diagram to the right 
is the span $2\xmrelleftarrow{a}1\xmrelrightarrow{b} 2$ where
$(a 0)(0) = (b 0)(1) = 1$, $(a 0)(1)=5$ and $(b 0)(0)=2$.

Considering the diagrams of $\eqref{eq:csp}$ and $\eqref{eq:common}$ in $\Spr{\Relfm}$, all the equations in \eqref{eq:cspcomonoid}, \eqref{eq:cspassoc}, \eqref{eq:frobenius}, \eqref{eq:separable}, \eqref{eq:compactclosed} hold in $\Spr{\Relfm}$.
On the other hand, the structure in \eqref{eq:sp} and \eqref{eq:common} satisfies the equations in \eqref{eq:spcomonoid}, \eqref{eq:spassoc}, \eqref{eq:bialgebra} and \eqref{eq:reverseunit}. Differently from \eqref{eq:diagcodiag}, in $\Spr{\Relfm}$ we have the following:
\begin{equation}\label{eq:speqs2}\tag{$\ldiagsym\lcodiagsym{}^{\mathcal{M}}$}
\lower17pt\hbox{$\includegraphics[width=1.2cm]{ldiag}$} \ \comp\ 
\lower17pt\hbox{$\includegraphics[width=1.2cm]{lcodiag}$} \ =\ 
\lower9pt\hbox{$\includegraphics[width=1.2cm]{id}$} 
\end{equation}

Below, we show how \eqref{eq:csp} and \eqref{eq:sp} interact in $\Spr{\Relfm}$.

\begin{equation}\label{eq:mdiagandzero}\tag{$\diagsym\!\!\rzerosym\!\!\lzerosym\!\!{}^\mathcal{M}$}
\lower18pt\hbox{$\includegraphics[width=1.2cm]{diag}$} \ \comp\ 
\lower18pt\hbox{$\includegraphics[width=1.2cm]{idtensorrzero}$} \ =\   \lower9pt\hbox{$\includegraphics[width=1.2cm]{rzerocomplzero}$}
\quad
\lower18pt\hbox{$\includegraphics[width=1.2cm]{codiag}$} \ \comp\  \lower9pt\hbox{$\includegraphics[width=1.2cm]{rzero}$} \ =\  \lower18pt\hbox{$\includegraphics[width=1.2cm]{rzerotensorrzero}$}
\end{equation}
\begin{equation}\label{eq:mldiagandend}\tag{$\ldiagsym\rightEndsym\leftEndsym{}^{\mathcal{M}}$}
\lower18pt\hbox{$\includegraphics[width=1.2cm]{ldiag}$} \ \comp\  \lower18pt\hbox{$\includegraphics[width=1.2cm]{idtensorrightend}$} \ =\  \lower9pt\hbox{$\includegraphics[width=1.2cm]{idandrightend}$} 
\quad
\lower18pt\hbox{$\includegraphics[width=1.2cm]{lcodiag}$} \ \comp\  \lower9pt\hbox{$\includegraphics[width=1.2cm]{rightend}$} \ =\  \lower18pt\hbox{$\includegraphics[width=1.2cm]{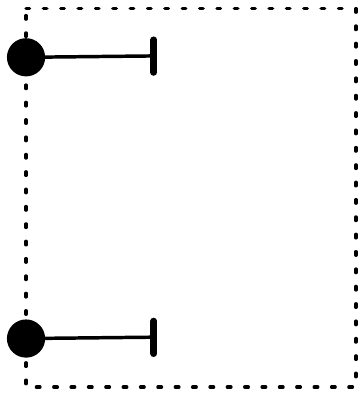}$}
\end{equation}
\begin{equation}\label{eq:mdiagandldiag}\tag{$\diagsym\lcodiagsym{}^\mathcal{M}$}
\lower18pt\hbox{$\includegraphics[width=1.2cm]{diag}$} \ \comp\ 
\lower18pt\hbox{$\includegraphics[width=1.2cm]{lcodiag}$} \ =\  \lower9pt\hbox{$\includegraphics[width=1.2cm]{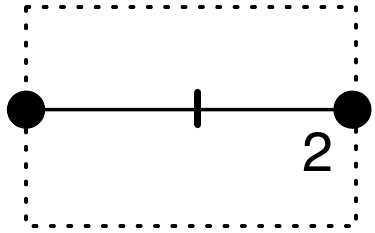}$}
\quad
\lower18pt\hbox{$\includegraphics[width=1.2cm]{lcodiag}$} \ \comp\ 
\lower18pt\hbox{$\includegraphics[width=1.2cm]{diag}$} \ =\ 
\lower18pt\hbox{$\includegraphics[width=1.2cm]{twodiags}$}
\end{equation}
\begin{equation}\label{eq:mdiagandldiag} \tag{$\diagsym\ldiagsym^\mathcal{M}$}
\lower18pt\hbox{$\includegraphics[width=1.2cm]{diag}$} \ \comp\ 
\lower22pt\hbox{$\includegraphics[width=1.3cm]{ldiagtensorid}$} \ =\  \lower22pt\hbox{$\includegraphics[width=1.3cm]{diagldiag}$}
\quad
\lower18pt\hbox{$\includegraphics[width=1.2cm]{ldiag}$} \ \comp\ 
\lower22pt\hbox{$\includegraphics[width=1.3cm]{diagtensorid}$}\ =\ 
\lower22pt\hbox{$\includegraphics[width=1.3cm]{ldiagdiag}$}
\end{equation}


The equations in~\eqref{eq:mdiagandzero} are the same as in \eqref{eq:diagandzero}. The left equation in \eqref{eq:mldiagandend}
is the same as the corresponding one in \eqref{eq:ldiagandend}, but the right hand side equations differ because the contention relation does not play a role in $\Spr{\Relfm}$. The right hand side equation in \eqref{eq:mdiagandldiag}
agrees with the corresponding one in \eqref{eq:diagandlcodiag}, but the left one deserves attention: while in \eqref{eq:diagandlcodiag} there was no possible synchronisation between $\diagsym$ and $\lcodiagsym$ because of the fact that the two links in $\lcodiagsym$ were in contention, in $\Sp{\Relfm}$ there is a synchronisation that involves all three links, as represented in the left hand side equation of \eqref{eq:mdiagandldiag}. The interaction between $\ldiagsym$ and $\ldiagsym$ is as in \eqref{eq:diagandldiag}, and \eqref{eq:mdiagandldiag} are the same as \eqref{eq:diagandldiag}. 

\begin{theorem}\label{thm:syntaxformulti}
Every arrow in $\Spr{\Relfm}$ decomposes into an expression consisting only of $\diagsym$, $\rightEndsym$, $\codiagsym$, $\leftEndsym$, $\ldiagsym$, $\rzerosym$, $\lcodiagsym$, $\lzerosym$, $\idsym$, $\twsym$, composed with $\comp$ and $\otimes$.
\end{theorem}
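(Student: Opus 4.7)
The plan is to mirror the strategy used for Theorem~\ref{thm:syntaxforcontention}: reduce any arrow to a tensor of single-link spans via the Frobenius structure carried by the cospan-like generators, then construct each single link from the span-like generators together with appropriate wiring.

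First I would fix a representative $k \xmrelleftarrow{f} x \xmrelrightarrow{g} l$ of an arbitrary arrow of $\Spr{\Relfm}$ and enumerate $x = \{x_1, \ldots, x_n\}$. Using that $\diagsym, \codiagsym, \rightEndsym, \leftEndsym$ satisfy \eqref{eq:frobenius}, \eqref{eq:separable} and \eqref{eq:compactclosed} in $\Spr{\Relfm}$, I would build a ``distributor'' $D_k\colon k \to nk$ and a ``collector'' $C_l\colon nl \to l$ out of $\diagsym$'s and $\codiagsym$'s (and $\rightEndsym$'s, $\leftEndsym$'s in the degenerate case $n=0$), and factor the span as
\[
D_k \comp (s_1 \otimes \cdots \otimes s_n) \comp C_l,
\]
where $s_i\colon k \to l$ is the single-link span isolating $x_i$, with middle $\{*\}$ and profile $(f(x_i), g(x_i))$. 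Because $[f,g]$ is jointly monic, the profiles $(f(x_i), g(x_i))$ are pairwise distinct, so the epi-mono factorisation applied in each composition does not merge these links.

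Second, I would construct each single-link span $s\colon k \to l$ with profile $(\mathcal{U}, \mathcal{V}) \in \multiset{k} \times \multiset{l}$ in stages. I would first produce a single link $1 \to 1$ of any prescribed multiplicity pair $(m, n) \in \N\times\N$, using ``amplifier'' composites such as $\diagsym \comp \lcodiagsym$ (which yields profile $(1,2)$), $\ldiagsym \comp \codiagsym$ (which yields $(2,1)$), their transposes and $\idsym$, iterated appropriately; completely isolated links of profile $(0,0)$ are obtained from $\leftEndsym \comp \rightEndsym$. I would then use $\diagsym$- and $\codiagsym$-diagonals to branch this link into the correct boundary ports with the per-port multiplicities specified by $\mathcal{U}$ and $\mathcal{V}$, padding unused ports with $\rightEndsym$ and $\leftEndsym$ via the compact closed structure.

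The principal obstacle is ensuring in this second step that the iterated compositions---each of which invokes an epi-mono factorisation---collapse to exactly one link with the target profile, without spurious parallel links or altered multiplicities. I would discharge this by induction on $|\mathcal{U}|+|\mathcal{V}|$, tracking minimal synchronisations through each intermediate composite via Lemma~\ref{lem:decomposition}, and using the collapsing equation~\eqref{eq:speqs2} together with the ambient Frobenius and bialgebra-like axioms to absorb any extra links that arise along the way.
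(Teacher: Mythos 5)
The paper gives no proof of this theorem (it is ``Omitted''), so there is nothing to compare your plan against; I can only assess it on its own terms. Your second stage --- building an individual link of arbitrary profile from $\diagsym$, $\codiagsym$ and the ``amplifier'' composites such as $\diagsym\comp\lcodiagsym$, then padding with $\rightEndsym$, $\leftEndsym$, $\rzerosym$, $\lzerosym$ --- is sound in spirit and is where the Frobenius generators genuinely do the work. The gap is in your first stage.

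Building the distributor $D_k$ and collector $C_l$ out of $\diagsym$ and $\codiagsym$ does not yield the claimed factorisation $D_k\comp(s_1\otimes\cdots\otimes s_n)\comp C_l$, because composition in $\Spr{\Relfm}$ is by minimal multi-synchronisation, and the single middle link of $\diagsym$ is attached to \emph{both} of its right-hand ports. Consequently any non-trivial synchronisation of $D_k$ with $s_1\otimes\cdots\otimes s_n$ must cover all the copies simultaneously: the $n$ links are fused into one minimal synchronisation rather than kept separate (and when the left multiplicities $f(x_i)$ differ, the unique minimal synchronisation even rescales them by least-common-multiple factors). Concretely, take the arrow $\ldiagsym$ itself, i.e.\ the span with two links of profiles $(\{0\},\{0\})$ and $(\{0\},\{1\})$: your recipe gives $\diagsym\comp(s_1\otimes s_2)\comp C_2$, whose middle after composition is the single synchronisation $\synch{m}{y_1+y_2}$, producing one link attached to both right-hand ports --- that is, it ``proves'' $\ldiagsym=\diagsym$, which is false in $\Spr{\Relfm}$. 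The joint-monicity observation does not help here: the links are merged by the synchronisation step, before the epi--mono factorisation is ever reached. The repair is to swap the roles of the two structures: the splitting into a tensor of single links must be done with the bialgebra generators $\ldiagsym$, $\lcodiagsym$ (and $\rzerosym$, $\lzerosym$), whose $n$-fold comultiplication has $n$ \emph{separate} middle elements, one per branch, so that the minimal synchronisations are exactly one per link $x_i$ and reproduce the profiles $(f(x_i),g(x_i))$ verbatim; the Frobenius generators and amplifiers are then confined to your second stage, realising each individual link's attachments and multiplicities. With that substitution the outline becomes a workable proof.
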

\begin{proof}
Omitted.
\end{proof}


\section{Conclusion}
\label{sec:conclusion}


We have studied two categories of linking diagrams. The first, $\Sp{\Relfc}$,  arose from the study of a compositional algebra of C/E nets, called C/E nets with boundaries. Indeed, the arrows of $\Sp{\Relfc}$ are just C/E nets with boundaries, without places. The second, $\Spr{\Relfm}$ arose from the study of a compositional algebra of P/T nets, called P/T nets with boundaries. The arrows of $\Spr{\Relfm}$ are P/T nets with boundaries, without places. These categories generalise previous work by Hughes~\cite{Hughes2008}.

Both categories are ``expressive enough'' to carry two different commutative monoid-comonoid structures on objects, one of which a separable Frobenius algebra, the other a commutative bialgebra. In both settings the interaction between the two structures is interesting and we have examined some of the phenomena that arise. Both categories are generated by the small number of basic components that witness the monoid-comonoid structures.

In future work a full axiomatisation will be presented, and the categories of linking diagrams will be shown to characterise the arrows of the resulting free categories. The theory of PROPs~\cite{Lack2004a} seems well adapted for expressing the relationship between the algebraic structures, as well as the complete algebras of C/E and P/T nets; Fiore and Campos~\cite{Fiore2013} have recently used a similar setting to develop the algebra of dags.

\paragraph{Acknowledgment.}
Thanks to R.F.C.\;Walters for inspiration and guidance, and to the referees for  remarks that have improved the presentation.
{
\tiny
\bibliography{jab}
\bibliographystyle{abbrv}
}
\end{document}